\theoremstyle{plain}
\newtheorem{theorem}{Theorem}[section]
\newtheorem{corollary}[theorem]{Corollary}
\theoremstyle{definition}
\newtheorem{definition}{Definition}
\theoremstyle{remark}
\newtheorem{remark}{Remark}
\begin{document}


\title{{\textit{Myopic robust index tracking with Bregman divergence}}}
\bigskip

\author{S. PENEV$^{\ast}$$\dag$\thanks{$^\ast$Corresponding author.
Email: S.Penev@unsw.edu.au}  P.V. SHEVCHENKO${\ddag}$ W. WU$\dag$\\
\affil{$\dag$School of Mathematics and Statistics, UNSW Sydney, NSW 2052 Australia\\
$\ddag$Department of Actuarial Studies and Business Analytics, Macquarie University, Sydney, NSW 2109 Australia, and Center for Econometrics and Business Analytics, St. Petersburg State University, Saint Petersburg, Russia
}
\received{final version  released June 2021} }
\bigskip
\maketitle

\begin{abstract}
Index tracking is a popular form of asset management. Typically, a quadratic function is used to define the tracking error of a portfolio and the look back approach is applied to solve the index tracking problem. We argue that a forward looking approach is more suitable, whereby the tracking error is expressed as an expectation of a function of the difference between the returns of the index and of the portfolio. We also assume that there is a model uncertainty in the distribution of the assets, hence a robust version of the optimization problem needs to be adopted. We use Bregman divergence in describing the deviation between the nominal and actual (true) distribution of the components of the index. In this scenario, we derive the optimal robust index tracking portfolio  in a semi-analytical form as
a solution of a system of nonlinear equations. Several numerical results are presented that allow us to compare the performance of this robust portfolio with
the optimal non-robust portfolio. We show that, especially during market downturns, the robust portfolio can be very advantageous.
\end{abstract}
\bigskip
\begin{keywords}
Index tracking; Robust index tracking; Bregman divergence; Kullback-Leibler divergence
\end{keywords}
\bigskip

\begin{classcode}G11; D81\end{classcode}
\newpage

\section{Introduction.} \label{intro}

A popular form of passive asset management is the so-called index tracking  (see discussions, for example, in \cite{AndN14,BeaMC03,GaiKW05}). Essentially, it means that the fund
manager (or the investor) tries to replicate the performance of an index either through its value or its return (see \cite{StrB18} and the references therein).
In a frictionless and liquid market, a full replication portfolio (i.e. by holding exactly the same composition as the index) obviously yields the best
tracking performance. This  has already been discussed in many past studies, e.g., \cite{BeaMC03,StrB18}. However,  if transaction costs are
considered or some of the components of the index are illiquid (see \cite{MagTMP07}), then a full replication portfolio does not necessarily deliver the best performance.
 This is due to the fact that a full replication portfolio will involve high transaction costs and because buying and selling  illiquid assets will be difficult. Thus, to replicate the index, the fund manager may
choose a tracking portfolio with only a subset of representative assets (see \cite{StrB18, dePdC16, GuaS12} among others). It is worth noting that, in
general, the assets in the tracking portfolio do not have to be the components of the index as long as they exhibit  good similarity with the index (see for
example, \cite{AndN14}). \\

To  satisfactorily solve the index tracking problem or the related enhanced tracking problem (which tracks the index as well as outperforms it), predominantly the look back approach has
been used in the literature, see e.g., \cite{StrB18,GuaMOS16,ChiTM13,MonVFD08,BeaMC03}. This approach relies on the assumption that a portfolio that has tracked
the index well in the past will also demonstrate good tracking performance in the future. The past realizations of
the return (or value) of the index and the return (or value) of the tracking portfolio are collected and the tracking error is defined as a function of the
difference between  the index and the tracking portfolio. A quadratic function is often used to define such a tracking error.\\

The above approach may lead to  poor performance if the future differs vastly from the past. Another way to solve the index tracking problem is by adopting the  forward looking approach. This is based on defining the tracking error as the expectation of a function of the difference between the return of the index and the
return of the tracking portfolio (see for example, \cite{dePdC16,MeaS90}). The expectation is calculated by using the joint distribution  of the index
and of the tracking portfolio. A reliable estimation of the joint distribution is required in order to  guarantee a good performance of this approach. If uncertainty in the
distribution of the assets is present, a robust version of the approach needs to be taken.
\\

 From the onset of the paper, we should point out that index tracking has not received the attention it deserves until recently, but the situation has since changed dramatically. Many aspects of index tracking attract the attention of the research community.\\

 An interesting recent research is directed towards the attempt to unify the two tasks, selecting the ``optimal subset'' and determining the ``best'' weights in a single step. This approach is attractive especially if an index contains a large number of stocks while one is interested  to have a (very) small number of stocks in the portfolio (\cite{Benidis18}). In other words, the goal in this case is to select a sparse index tracking portfolio. While we  are also  interested in developing such types of  approaches, they are based on different optimization techniques (e.g., sparse non-convex optimization \cite{Benidis18}, mixed-integer programming \cite{Canakgoz2008}, \cite{Filippi2016}), and  we are not discussing these approaches in our paper. \\

 It may be possible to use quite  sophisticated statistical multivariate models and numerical approximation techniques to better mimic the underlying joint distribution of the stock returns and the index returns, and our theory allows for such models to be used as a nominal distribution in index tracking. However, our  point of view in this paper is that even then a robustness approach is worth using as it has the potential to safeguard against deviations from the nominal model.\\

The paper \cite{Gnagi2020} gives a review of recent approaches  and examines critically the gaps in the literature on enhanced index-tracking problem. It also proposes novel matheuristics (combining metaheuristics with mathematical programming) numerical algorithms to solve index-tracking problems with very large indices (e.g., with more than 9000 constituents).   Among the many constraints they consider, is also the achievement of a given target excess return in the future. Because future outcomes are uncertain, they end up  requiring a minimum expected excess return which may not deliver the target in an out-of-sample period. The paper does not deal with the robustness-type penalties that are of main interest in our paper.
\\

As mentioned, the most recent literature focuses on the look back approach and a great effort has been made to model transaction costs and other sophisticated
restrictions on the tracking portfolio such as choosing which components of the index to include. The aim of this paper is,
however, to develop a {\it robust}  version of index tracking. As far as the authors are aware, this has not been
done in the past. When done,the concern has mainly been about robustness with respect to parameter uncertainty rather than robustness about distributional uncertainty (\cite{Fabozzi2007}). One possible  exception is  \cite{Lej12} where a minimax game theoretic interpretation of the robust forward looking approach is discussed. However, in \cite{Lej12},
only parameter uncertainty is considered, that is, the authors assume that the excess returns are imperfectly known but belong to a  class of distributions characterized by an ellipsoidal distributional set. In contrast, we consider the uncertainty in the joint distribution of the index and the tracking portfolio, and find the
optimal way to track the index under the worst case distribution. Our approach could be compared to the recently popular distributional robustness methodology. A survey of this methodology, with a focus on actuarial applications, can be found in \cite{BLTY2019}. To the best of our knowledge, we have not seen application of this methodology to the index tracking problem.\\

The model uncertainty in our paper is measured through a special form of Bregman divergence (see \cite{Bre67}). The notion of Bregman divergence is quite general. It has been used as a means to measure the pairwise dissimilarity between matrices (\cite{PenPr16}), between vectors (\cite{BMDG05}), and also between functions (\cite{GD14}, \cite{PenN18}). In the latter case, the authors in \cite{GD14} call it a functional Bregman divergence. Precise definition is given in Section \ref{myopic}.\\

The classical Kullback-Leibler (KL) divergence also belongs to the class of  Bregman divergences. It can be used as a benchmark. However, KL divergence  is not appropriate to handle heavy tailed distributions which are commonly  observed in financial asset returns (\cite{DeyJ10}, \cite{PocS11}). Hence we choose another family of Bregman divergences whereby the convex function in the divergence's definition in Section \ref{myopic} has a stronger polynomial growth than the one that is used in the case of KL.  Yet we are using a family of Bregman divergences that is parameterized by one positive parameter
{$\lambda$}
 only and is such that allows us to recover the KL divergence in a limit when
{$\lambda \to 0$}.
  In such a way we can study the effect of stronger robustification achieved when
$\lambda $ runs away from the zero value. As we point out in detail in Remark 3.3, this goal is indeed achieved by our choice of the function $F_{\lambda }(.)$ in the specification of the Bregman divergence. \\

Our first contribution is the derivation of a semi-closed form of the worst case distribution
 for
  the chosen  Bregman divergence.
  The second contribution is the derivation of the optimal index tracking portfolio in a semi-analytical form.  The derivation of such a robust index tracking portfolio was the main goal of our work.  Thirdly, we investigate the performance of our  proposed robust tracking portfolio
in a short numerical study aiming to demonstrate the robustness effect. Our next contribution is related to extending our approach to deal with a variety
of loss functions that are suitable to measure the quality of an index tracking portfolio. Finally, we also include a real data example in order to illustrate the
out-of-sample performance of our method.\\

The structure of this paper is outlined below. In Section 2, we formulate the index tracking problem, and present the look back approach and the forward looking
approach through a simple example. In Section 3, we formulate the robust index tracking problem and derive the robust index tracking portfolio. In Section 4, we extend our model to tackle enhanced index tracking. In Section 5 we  present our numerical study. Section 6 concludes  and  outlines  avenues for further research. \\



\section{Myopic Index Tracking} \label{sec:probform}

In this section, we formulate the index tracking problem and compare the look back approach and the forward looking approach through a simple example.
Consider an index  that consists of $\ell > 1$ risky assets. A fund manager is interested in constructing a tracking portfolio that contains $d < \ell$ risky assets
that may not necessarily belong to the index.
The aim  is to replicate the return of the index over a fixed investment period.
 In particular, since the return of the index is available for analysis  after each investment period, we focus on a short term one-period (called ``myopic'') index tracking.\\

 Within this period, we denote by $\bm{r}$  the random vector of returns of the
risky assets  $\bm{r} = (r^{1}, ..., r^{d})^\intercal $  included in the portfolio. Hence  $r^{i}$, $i = 1, ...,d$ denotes the return of the $i$th individual asset over the intended investment period. The return of the index over this investment period is denoted as $b$. We then define  $\bm{R} = \bold{1}+\bm{r}$, and $B = 1 + b$. \\

Throughout the paper, we  assume that all random quantities are  defined on a
complete probability space ($\Omega, \mathcal{F}, \mathbb{P}$) with the sample space $\Omega$, the $\sigma$-algebra $\mathcal{F}$, and the probability measure
$\mathbb{P}$, where the $\sigma$-algebra $\mathcal{F} =  \sigma(\bm{r},B)$.

In addition, we assume that short selling is permitted.\\

At the beginning of the investment period, the fund manager re-balances the portfolio with a control  $\bm{u}$, where $\bm{u}^{i} \in \mathbb{R}$, $i = 1, ..., d$,
denotes the proportional allocation of the wealth of the investor into the $i$th asset. Let $\mathcal{U}$ be the set of admissible strategies $\bm{u} \in \mathcal{U}$ such that
\begin{eqnarray}
  U = \Big\{\bm{u} \in \mathbb{R}^{d}: \bm{1}^\intercal\bm{u} = 1\Big\}. \nonumber
\end{eqnarray}


Two approaches exist to track an index.
The look back approach finds the optimal portfolio based on the historical data. Let $\bm{r}_{-n}$ and $b_{-n}, n = T, (T-1), ..., 1$, denote the return of the
risky assets and the return of the index, respectively, on the $n$th day before today. Define $\bm{R}_{-n} = \bm{1} + \bm{r}_{-n}$, and $B_{-n} = 1 + b_{-n}$. Under this approach, the optimal control  $\bm{u}$ can be obtained by solving
\begin{eqnarray}\label{eqn:nonrobustlookback}
  \mathcal{V} = \inf_{\bm{u} \in \mathcal{U}}\frac{1}{T}\sum_{n = 1}^{T}  \big(\bm{R}_{-n}^{\intercal}\bm{u} - B_{-n}\big)^{2}.
\end{eqnarray}
The  joint distribution of the index and of the tracking portfolio  is taken to be the empirical distribution. \\

In contrast, the forward looking approach finds the optimal portfolio by solving 
\begin{eqnarray}\label{eqn:nonrobust}
  \mathcal{V} &  =  & \inf_{\bm{u} \in \mathcal{U}}\mathbb{E}\Big( \big(\bm{R}^{\intercal}\bm{u} - B\big)^{2}\Big).
\end{eqnarray}
Under this approach, the actual  distribution can be assumed to be essentially arbitrary. \\

It is easy to see that the forward looking approach relies on the future estimation of the actual distribution. If the empirical distribution delivers a good estimate of the future,
then the look back approach is equivalent to the forward looking approach. However, this is not the case if another assumption is made about the actual distribution. Thus, unless
the empirical distribution represents a reliable estimate, the two approaches yield different outcomes  in general. In addition, even though the empirical
distribution may represent a good estimate of the future, there is always an uncertainty in the estimation of the actual distribution.
It is obvious that the  “look back” is just a specific way to perform “forward looking” by using an empirical distribution of risk factors.

\section{Myopic Robust Index Tracking}
\label{myopic}

To model the uncertainty associated with the estimation of the actual distribution, we use the Bregman divergence.

\subsection{The Bregman Divergence}

The following definition of a  Bregman divergence is taken from \cite{PenN18}.
\begin{definition}
Given a strictly convex function $F: A \rightarrow \mathbb{R}$, where $A \subset \mathbb{R}^{d}$ is a convex set, the local Bregman divergence between two
 points $X\in A$ and $Y\in A$ is defined as
 \begin{eqnarray}
  d_{F}(X,Y)  &  =  & F(Y) - F(X)  - \nabla F(X)(Y - X).  \nonumber
\end{eqnarray}
\end{definition}
The above definition can also be applied point wise for
 positive density functions $f$, $g$ defined on a common domain. The point wise application means that in this case $d=1,$  $\nabla $ means a simple derivative $F'$ and we interpret locally, for a fixed $t$
\begin{eqnarray}
  d_{F}(f(t),g(t))  &  =  & F(g(t)) - F(f(t))  - F'(f(t))(g(t) - f(t)).  \nonumber
\end{eqnarray}
Using this localized divergence measure at the point $t,$ we then define the global (or also called {\it functional) Bregman divergence} between the densities $f$ and $g  :$

\begin{eqnarray}
\label{FBD}
   D_{Breg}(f,g)  & :=  &   \displaystyle\int d_{F}(f(t),g(t)) v(t)dt,
\end{eqnarray}
where $v$ is some non-negative weight function.

In the definition of the Bregman divergence, any  strictly convex function $F(\cdot)$ could be chosen. A specific choice has been suggested in the  paper  \cite{PenN18}, which we also take on board  here. We take the strictly convex function
$ F_{\lambda}(z): (0,\infty) \to \mathbb{R}^1$ to be
\begin{eqnarray}
  F_{\lambda}(z) = \frac{z^{\lambda+1} - (\lambda+1)z}{\lambda},   \ \ \textnormal{for a fixed} \ \lambda > 0.       \nonumber
\end{eqnarray}
Then, for two densities $f$ and $g$ of the $d$-dimensional argument $x, $ it is easy to see that
\begin{eqnarray}
  d_{F}(f(x),g(x))  &   =   &   F_{\lambda}(g(x)) - F_{\lambda}(f(x))  - <\nabla F_{\lambda}(f(x)){^{T}}, g(x) - f(x)>  \nonumber \\
              &   =   &   \frac{g(x)^{\lambda+1}}{\lambda} - \frac{(\lambda+1)g(x)f(x)^{\lambda}}{\lambda} + f(x)^{\lambda+1}.  \nonumber
\end{eqnarray}

It is worth noting that our special form of Bregman divergence is closely related to the so-called Tsallis divergence and $\alpha$-divergence (see for example \cite{CicA10,PocS11}).
 $\alpha$-divergence is closely related to the special case of our Bregman divergence. It differs by a re-parameterization on which both divergences are defined.
It is quite a laborious task to discuss and compare the many different types of divergences that exist in the literature. Such a comparison is done in the monograph  \cite{Amari16} and in \cite{DeyJ10}. The reader can also  consult p. 187 in \cite{AmariCich2010} where our current use of Bregman divergence is discussed  under the name ``$\beta$-divergence''. It contains both Kullback-Leibler and Itakura-Saito divergences and
has found earlier applications in classical robustness and in machine learning.  Similar comparisons are also done  on p. 9 of \cite{DeyJ10}  which  demonstrates that the polynomial divergence of which our chosen Bregman divergence is based, is a strictly monotone increasing function of the R\'{e}nyi entropy. Hence minimizing the one or the other delivers essentially the same result.\\

 In this paper, we  prefer to work with the Bregman divergence in  its {\it functional form} (\ref{FBD}) when describing an ambiguity set around certain nominal probability distribution. Earlier approaches have described ambiguity sets using support and moment information or structural properties such as symmetry, unimodality, etc. We suggest that in the setting of index tracking, the uncertainty around the nominal model is difficult to specify accurately enough and for this reason considering a ``full'' functional-type neighbourhood is more suitable to start  with. It is remarkable that the objective function for the robust index tracking problem in this functional setting is  tractable and allows us to derive the optimal robust portfolio in a semi-analytical form as a solution of a system of nonlinear equations.

Since our purpose is to robustify the inference, it is {\it essential} for us to use the Bregman divergence (in particular, in its specific form of $\beta$- divergence). In  \cite{MiEguchi2002} the $\beta$-divergence  is used (which is essentially the same as the density power divergence of \cite{BHHJ98}) and it is explicitly demonstrated that using the global $\beta$-divergence leads to parameter estimators that are robust in the classical sense i.e., their influence function is bounded.\\

Now, by choosing the following weight function
\begin{eqnarray}
  v(x) = \frac{f(x)}{f(x)^{\lambda + 1}},                                 \nonumber
\end{eqnarray}

we have constructed the following functional Bregman divergence:
\begin{eqnarray}\label{eqn:Bregdiv}
  D_{Breg}(f,g)  & :=  &    \displaystyle\int_{\mathbb{R}^d} d_{F}(f(x),g(x)) v(x)dx \nonumber \\
                         &  =  &    \int \Big( \frac{g^{\lambda+1}(x)}{\lambda} - \frac{(\lambda+1)g(x)f^{\lambda}(x)}{\lambda}
                                  + f^{\lambda+1}(x)\Big)\frac{f(x)}{f(x)^{\lambda + 1}} dx   \nonumber \\
                         &  =  &    \int \Big(\frac{1}{\lambda}\mathcal{E}^{\lambda+1} - \frac{\lambda+1}{\lambda}\mathcal{E} + 1\Big)f(x) dx   \nonumber \\
                         &  =  &    \mathbb{E}\Big(\frac{1}{\lambda}\mathcal{E}^{\lambda+1} - \frac{\lambda+1}{\lambda}\mathcal{E} + 1\Big)   \nonumber \\
                         &  =  &    \mathbb{E}\big(G(\mathcal{E})\big),
\end{eqnarray}
where $G(\mathcal{E})=\frac{1}{\lambda}\mathcal{E}^{\lambda+1} - \frac{\lambda+1}{\lambda}\mathcal{E} + 1.$\\

From now on, we will always denote the nominal distribution's density  by $f$. Expectations in this paper are always supposed to be taken with respect to the nominal distribution $f$. Then there will be a one-to-one correspondence between the pair $(f(x), g(x))$ and the pair $(f(x), \mathcal{E}(x)) $ where
\begin{eqnarray}
  \mathcal{E}(x) = \frac{g(x)}{f(x)}.                        \nonumber
\end{eqnarray}
Hence by slight abuse of notation, we will shorten the notation by replacing
$
D_{Breg}(f,g)$ with $D_{Breg}(\mathcal{E}).$

\begin{remark}
We comment on our choice of the weight function $  v(x) = \frac{f(x)}{f(x)^{\lambda + 1}}$ in the definition of the functional Bregman divergence. Its usefulness goes much beyond the obvious fact that it leads to a simple expression for $D_{Breg}(\mathcal{E}).$
In  \cite{VLAN11} it is argued that re-scaling at each $x$ the usual Bregman divergence can bring about intrinsic robustification. The way the Bregman divergence is re-scaled in  \cite{VLAN11} leads to the so-called Total Bregman Divergence,
which is too computationally heavy to be implemented in our  case of functional Bregman divergence. However, the main idea of using a re-scaling that is inversely dependent on the derivative of the $F_{\lambda}(\cdot)$ function can be applied also in our case and suggests the above choice of $v(x).$ Our numerical experiments support this choice.

\end{remark}

\begin{remark}\label{remark:klconverge}
  One essential advantage of the special form of the Bregman divergence we use is that by just one parameter ($\lambda >0)$ we can control the deviation from the well-known Kullback-Leibler (KL) divergence $\mathbb{E}\Big(\mathcal{E}\log(\mathcal{E})\Big).$ The latter is obtained as a limiting case when  $\lambda \rightarrow 0. $  Indeed, as $\lambda \rightarrow 0$, we see that
  \begin{eqnarray}
     \frac{z^{1+\lambda} - (\lambda+1)z}{\lambda} & \rightarrow &  z\log(z)- z.         \nonumber
  \end{eqnarray}
  This limit result then yields
  \begin{eqnarray}
                      \mathbb{E}\Big(\frac{1}{\lambda}\mathcal{E}^{\lambda+1} - \frac{\lambda+1}{\lambda}\mathcal{E} + 1\Big)
     & \rightarrow &  \mathbb{E}\Big(\frac{g(x)}{f(x)}\log\Big(\frac{g(x)}{f(x)}\Big) - \frac{g(x)}{f(x)} + 1\Big)   \nonumber \\
     &      =      &  \mathbb{E}\Big(\frac{g(x)}{f(x)}\log\Big(\frac{g(x)}{f(x)}\Big)\Big)   \nonumber  \\
     &      =      &  \mathbb{E}\Big(\mathcal{E}\log(\mathcal{E})\Big).   \nonumber
  \end{eqnarray}
The value of $\lambda >0$ parameterizes a whole class of  Bregman divergences and dictates the extent to which the robust method differs from the non-robust method (in Basu et al. (1998) it is called an algorithmic parameter). By varying the value of $\lambda >0$ we can achieve a compromise between robustness and efficiency, as is  standard in robust statistic setting. Larger values of $\lambda$ correspond to a stronger emphasis on robustness. These would be useful when there is a belief that the divergence  between the nominal distribution and the actual distribution of the returns might be large.
\end{remark}

In a special case where both the nominal distribution and the  actual distribution are multivariate normal, the above Bregman divergence can be calculated in a closed form.\\

\noindent \textbf{Example}
[Multivariate Normal].
Suppose that the nominal distribution is a $d$-dimensional multivariate normal distribution $N_{d}(\bm{\mu}_{1},\bm{\Sigma}_{1})$ and an actual distribution is another $d$-dimensional multivariate normal
distribution $N_{d}(\bm{\mu}_{2},\bm{\Sigma}_{2}) . $ Then, the Bregman divergence as defined above can be calculated in closed form as:
\begin{eqnarray}
\label{fullform}
             D_{Breg}(\mathcal{E})
   &  =  &   \frac{1}{\lambda}\Big(\frac{\big(\det(\bm{\Sigma}_{1}\bm{\Sigma}_{2}^{-1})\big)^{\lambda+1}\det(\tilde{\bm{\Sigma}}_{\lambda})}{\det(\bm{\Sigma}_{1})}\Big)^{\frac{1}{2}}
            \exp\Big(-\frac{\lambda+1}{2}\bm{\mu}_{2}^{\intercal}\bm{\Sigma}_{2}^{-1}\bm{\mu}_{2}     \nonumber \\
   &     & +  \frac{\lambda}{2}\bm{\mu}_{1}^{\intercal}\bm{\Sigma}_{1}^{-1}\bm{\mu}_{1} +            \frac{1}{2}\tilde{\bm{\mu}}_{\lambda}^{\intercal}\tilde{\bm{\Sigma}}_{\lambda}^{-1}\tilde{\bm{\mu}}_{\lambda}\Big) - \frac{1}{\lambda},
\end{eqnarray}
provided $\tilde{\bm{\Sigma}}_{\lambda}$ is positive definite, where
\begin{eqnarray}
   \tilde{\bm{\Sigma}}_{\lambda}  &  =  &  \big((\lambda+1)\bm{\Sigma}_{2}^{-1} - \lambda\bm{\Sigma}_{1}^{-1}\big)^{-1},                                                               \nonumber \\
      \tilde{\bm{\mu}}_{\lambda}  &  =  &  \tilde{\bm{\Sigma}}_{\lambda}\big((\lambda+1)\bm{\Sigma}_{2}^{-1}\bm{\mu}_{2} - \lambda\bm{\Sigma}_{1}^{-1}\bm{\mu}_{1}\big).                \nonumber
\end{eqnarray}
If $\bm{\Sigma}_{1} = \bm{\Sigma}_{2} = \bm{\Sigma}$, the above formula simplifies to
\begin{eqnarray}\label{eqn:bregdivsimp}
             D_{Breg}(\mathcal{E})
   &  =  &   \frac{1}{\lambda}\exp\Big(\frac{\lambda(\lambda+1)}{2}(\bm{\mu}_{2} - \bm{\mu}_{1})^{\intercal}\bm{\Sigma}^{-1}(\bm{\mu}_{2} - \bm{\mu}_{1})\Big) - \frac{1}{\lambda}.
\end{eqnarray}

It is worth noting that if we set $\lambda \to 0$ in     (\ref{eqn:bregdivsimp}) we get
$$ \frac{1}{2} ({\mu}_{2} - \bm{\mu}_{1})^{\intercal}\bm{\Sigma}^{-1}(\bm{\mu}_{2} - \bm{\mu}_{1}),$$
i.e., one half of the squared Mahalanobis distance between the multivariate normal distributions   $N_{d}(\bm{\mu}_{1},\bm{\Sigma})$ and
 $N_{d}(\bm{\mu}_{2},\bm{\Sigma}).$                                      \\
\subsection{Robust Index Tracking} \label{mrit}
To perform index tracking in a robust way, we need to consider perturbations of the nominal distribution of the index. These perturbed distributions can be  contained inside a  ball of certain radius around the nominal distribution.
To this end, let us construct a Bregman divergence ball.  Suppose that $f$, the so-called nominal distribution, is the joint density of the $\ell$ assets in the index, and $g$ is the density of a perturbation of $f$,
we denote by $S_f$ the set of all functions $\mathcal{E} : \mathbb{R}^{d} \to \mathbb{R} $ representable in the form
$\mathcal{E} (x)=\frac{g(x)}{f(x)},$ where $g$ is a density. A Bregman divergence ball around of radius $\eta >0$ around $f$ is defined as
\begin{eqnarray}\label{eqn:klball}
  \mathcal{B} := \mathcal{B}(\eta) = \{g: \mathcal{E} \in S_f \hbox{ and } D_{Breg}(\mathcal{E}) \leq \eta\}.
\end{eqnarray}

We stress again that all moments throughout this paper are defined with respect to the nominal distribution. Then the robust version of the control problem
(\ref{eqn:nonrobust}) is defined as
\begin{eqnarray}\label{eqn:robust}
   V  &  =  &    \displaystyle\sup_{\bm{u} \in \mathcal{U}}\inf_{\bm{\mathcal{E}} \in \mathcal{B}} J(\mathcal{E},\bm{u}),
\end{eqnarray}
where
\begin{eqnarray}
           J(\mathcal{E},\bm{u})
  &  =  &  \mathbb{E}\Big(-\mathcal{E}\big(\bm{R}^{\intercal}\bm{u} - B\big)^{2}\Big), \nonumber
\end{eqnarray}
In the next section, we will derive a semi-analytical form of the optimal portfolio under the constructed Bregman divergence.

\subsection{Robust Optimal Portfolio under Bregman Divergence} \label{sec:semianalyopall}
The robust optimal index tracking portfolio can be obtained by applying the following result.
\begin{theorem}\label{thm:bregmansystem}
  For a fixed, small enough $\lambda > 0 ,$ if there exist $\alpha^{\ast} > 0$, $\beta^{\ast}$ and $\theta^{\ast}$ such that
  \begin{eqnarray}
                 \theta^{\ast}\bm{1}
    &   =   &    \mathbb{E}\Bigg(\frac{\partial{H}}{\partial \bm{u}}\Big(\frac{\lambda}{\lambda+1}\Big(\frac{-\beta^{\ast} - H(\bm{u})}{\alpha^{\ast}}\Big) + 1\Big)^{\frac{1}{\lambda}}\Bigg),        \nonumber \\
                 \bm{1}^{\intercal}\bm{u}
    &   =   &    1,                                              \nonumber    \\
                 \mathbb{E}\big(G(\mathcal{E}^{\ast})\big)
    &   =   &    \eta,                                       \nonumber    \\
                 \mathbb{E}\big(\mathcal{E}^{\ast}\big)
    &   =   &    1,   \nonumber
  \end{eqnarray}
  where
  \begin{eqnarray}
         H(\bm{u})   &    =    &  - \big(\bm{R}^\intercal\bm{u} - B\big)^{2},    \nonumber \\
    \mathcal{E}^{\ast}   &    =    &    \Big(\frac{\lambda}{\lambda+1}\Big(\frac{-\beta^{\ast} - H(\bm{u})}{\alpha^{\ast}}\Big) + 1\Big)^{\frac{1}{\lambda}},             \nonumber
  \end{eqnarray}
  then $\bm{u}$ is an optimal index tracking portfolio.
\end{theorem}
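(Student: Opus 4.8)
The plan is to treat (\ref{eqn:robust}) as a nested optimization: solve the inner infimum over $\mathcal{E}$ for fixed $\bm{u}$ by Lagrangian duality, then solve the outer supremum over $\bm{u}$ by a standard constrained-optimization argument. The four displayed equations will emerge as the combined first-order (KKT) conditions of the two problems.

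First I would fix $\bm{u}$ and study $\inf_{\mathcal{E}\in\mathcal{B}} J(\mathcal{E},\bm{u}) = \inf_{\mathcal{E}} \mathbb{E}(\mathcal{E}\,H(\bm{u}))$ subject to the normalization $\mathbb{E}(\mathcal{E})=1$ (encoding that $g=\mathcal{E}f$ is a density) and the ball constraint $\mathbb{E}(G(\mathcal{E}))\le\eta$. The objective is linear in $\mathcal{E}$ and, since $G''(\mathcal{E})=(\lambda+1)\mathcal{E}^{\lambda-1}>0$ on $(0,\infty)$, the map $\mathcal{E}\mapsto\mathbb{E}(G(\mathcal{E}))$ is convex, so this is a convex program. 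Slater's condition holds because $\mathcal{E}\equiv1$ is strictly feasible ($D_{Breg}(1)=\mathbb{E}(G(1))=0<\eta$), giving strong duality. Introducing multipliers $\alpha\ge0$ for the ball constraint and $\beta$ for the normalization, I would form $L(\mathcal{E},\alpha,\beta)=\mathbb{E}(\mathcal{E}H)+\alpha(\mathbb{E}(G(\mathcal{E}))-\eta)+\beta(\mathbb{E}(\mathcal{E})-1)$ and minimize pointwise in $\mathcal{E}(x)$. The stationarity condition $H+\alpha G'(\mathcal{E})+\beta=0$, together with $G'(\mathcal{E})=\frac{\lambda+1}{\lambda}(\mathcal{E}^{\lambda}-1)$, solves to exactly the stated $\mathcal{E}^{\ast}=\big(\frac{\lambda}{\lambda+1}\frac{-\beta^{\ast}-H}{\alpha^{\ast}}+1\big)^{1/\lambda}$. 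Complementary slackness with a binding ball (the worst-case density saturates the budget) yields $\mathbb{E}(G(\mathcal{E}^{\ast}))=\eta$, while the normalization gives $\mathbb{E}(\mathcal{E}^{\ast})=1$; these are the third and fourth equations, determining $\alpha^{\ast},\beta^{\ast}$.

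Next I would address the outer supremum. Writing $V(\bm{u})=\inf_{\mathcal{E}\in\mathcal{B}}J(\mathcal{E},\bm{u})$, note that for each fixed $\mathcal{E}\ge0$ the map $\bm{u}\mapsto J(\mathcal{E},\bm{u})=-\mathbb{E}(\mathcal{E}(\bm{R}^{\intercal}\bm{u}-B)^{2})$ is concave (a nonnegatively weighted integral of concave quadratics), so $V$, being an infimum of concave functions, is concave in $\bm{u}$; hence the first-order conditions of the outer problem are also sufficient. Maximizing $V(\bm{u})$ subject to the linear constraint $\bm{1}^{\intercal}\bm{u}=1$ with multiplier $\theta$, the stationarity condition reads $\nabla_{\bm{u}}V(\bm{u})=\theta\bm{1}$. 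By the envelope theorem applied to the inner program, $\nabla_{\bm{u}}V(\bm{u})=\mathbb{E}(\mathcal{E}^{\ast}\,\partial H/\partial\bm{u})$, which upon substituting $\mathcal{E}^{\ast}$ gives precisely the first displayed equation; the constraint itself is the second. A solution $(\bm{u},\alpha^{\ast},\beta^{\ast},\theta^{\ast})$ of the full system then furnishes an optimal portfolio.

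The main technical obstacle, and the reason for the ``small enough $\lambda$'' hypothesis, is the well-posedness of $\mathcal{E}^{\ast}$: the pointwise minimization ignored the constraint $\mathcal{E}\ge0$, and the expression $\big(\frac{\lambda}{\lambda+1}\frac{-\beta^{\ast}-H}{\alpha^{\ast}}+1\big)^{1/\lambda}$ is only admissible where its base is nonnegative. As $\lambda\to0$ the base tends to $1$ uniformly, so for $\lambda$ small it stays positive and the unconstrained pointwise minimizer is automatically feasible; for larger $\lambda$ the nonnegativity constraint can bind on a set of positive measure and this clean closed form breaks down. I would also have to justify differentiating under the expectation (interchanging $\nabla_{\bm{u}}$ with $\mathbb{E}$, and the envelope step), which requires suitable integrability of $\mathcal{E}^{\ast}$ and $\partial H/\partial\bm{u}$ under $f$ — again most transparent for small $\lambda$, where $\mathcal{E}^{\ast}$ exhibits only mild polynomial growth.
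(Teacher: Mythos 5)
Your proposal is correct and follows the same overall decomposition as the paper: an inner Lagrangian for the ball and normalization constraints, whose pointwise stationarity condition $H+\alpha G'(\mathcal{E})+\beta=0$ yields the stated $\mathcal{E}^{\ast}$, with complementary slackness and normalization supplying the third and fourth equations, followed by an outer constrained maximization over $\bm{u}$ giving the first two. Where you genuinely diverge is in certifying optimality. For the inner problem the paper gives a hands-on feasible-direction argument (showing $K(t)=L^{inner}(t\mathcal{E}+(1-t)\mathcal{E}^{\ast},\bm{u})$ is convex with $K'(0)=0$) before invoking strong duality via Ben-Tal--Teboulle--Charnes; your appeal to convexity of the program plus Slater's condition (with $\mathcal{E}\equiv 1$, $\mathbb{E}(G(1))=0<\eta$) accomplishes the same thing more compactly. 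For the outer problem the paper substitutes $\mathcal{E}^{\ast}$ into the Lagrangian, adds the multiplier $\theta$ for $\bm{1}^{\intercal}\bm{u}=1$, differentiates (the terms involving $\partial\mathcal{E}^{\ast}/\partial\bm{u}$ cancel by the inner stationarity condition --- precisely your envelope-theorem step), and then verifies sufficiency by computing the Hessian of $L^{outer}$ explicitly and showing it is negative semi-definite using $\alpha^{\ast}>0$ and $\partial^{2}H/\partial\bm{u}\partial\bm{u}^{\intercal}=-2\bm{R}\bm{R}^{\intercal}$. Your alternative --- that $V(\bm{u})=\inf_{\mathcal{E}\in\mathcal{B}}J(\mathcal{E},\bm{u})$ is concave as an infimum of functions concave in $\bm{u}$ for each fixed $\mathcal{E}\ge 0$ --- is cleaner and does not depend on the particular form of $\mathcal{E}^{\ast}$, though it shifts the burden onto the Danskin/envelope identification $\nabla_{\bm{u}}V=\mathbb{E}(\mathcal{E}^{\ast}\,\partial H/\partial\bm{u})$, whose integrability requirements you rightly flag. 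Your diagnosis of the role of ``small enough $\lambda$'' (keeping the base of the $1/\lambda$ power positive) matches the paper's continuity argument from the KL limit.
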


\begin{proof}
  Using the definition of  $H(\bm{u})$
  we see that (\ref{eqn:robust}) becomes
  \begin{eqnarray}\label{eqn:robuststep1simp}
     &  \displaystyle\sup_{\bm{u} \in \mathcal{U}}\inf_{\mathcal{E} \in \mathcal{B}} &  \mathbb{E}\big(\mathcal{E}H(\bm{u})\big).
  \end{eqnarray}
  To solve the inner optimization problem, we first write down the Lagrangian. For a fixed $\alpha \geq 0$ and a $\beta \in \mathbb{R}$, the Lagrangian is
  \begin{eqnarray}
    L^{inner}(\mathcal{E},\bm{u})  &  =  &  \mathbb{E}\Big(\mathcal{E}H(\bm{u}) + \alpha(G(\mathcal{E}) - \eta) + \beta(\mathcal{E} - 1)\Big),     \nonumber
  \end{eqnarray}
  where
  \begin{eqnarray}
    G(\mathcal{E})   &  =  &  \frac{1}{\lambda}\mathcal{E}^{\lambda+1} - \frac{\lambda+1}{\lambda}\mathcal{E} + 1.    \nonumber
  \end{eqnarray}
  Differentiating inside  the expectation and setting the result equal to zero yields:
  \begin{eqnarray}\label{eqn:LagrangianeqninsideexpN1}
    0  & = &  H(\bm{u})+ \alpha G^{'}(\mathcal{E}) + \beta.
  \end{eqnarray}

  Solving this equation, we obtain
  \begin{eqnarray}
  \label{semiclosed}
    \mathcal{E}^{\ast}  & = &  (G^{'})^{-1}\Big(\frac{-\beta - H\big(\bm{u}\big)}{\alpha}\Big),
  \end{eqnarray}
  provided that $\alpha \neq 0$. \\

  Next, we verify that this is indeed an optimal solution. The proof follows similarly to \cite[proposition 2.3]{GlaX14} and  \cite[theorem 2]{DeyJ10}. The idea is to show that along any feasible direction the value of the Lagrangian can not be optimized any further. \\

 Choose an arbitrary $\mathcal{E}.$ For $t \in [0,1]$, define $\hat{\mathcal{E}} = t\mathcal{E}  + (1-t)\mathcal{E}^{\ast}$, then we have
  \begin{eqnarray}
    L^{inner}(\hat{\mathcal{E}},\bm{u})  &  =  &   \mathbb{E}\Big(H(\bm{u})\hat{\mathcal{E}} + \alpha(G(\hat{\mathcal{E}}) - \eta) +  \beta(\hat{\mathcal{E}} - 1)\Big) \nonumber \\
                                   &  =  &   \mathbb{E}\Big(H(\bm{u})\big(t\mathcal{E}  + (1-t)\mathcal{E}^{\ast}\big) +  \alpha\big(G\big(t\mathcal{E} + (1-t)\mathcal{E}^{\ast}\big)
                                           - \eta\big)   \nonumber \\
                                   &     & + \beta\big(\big(t\mathcal{E}  + (1-t)\mathcal{E}^{\ast}\big) - 1\big)\Big). \nonumber
  \end{eqnarray}

  If we consider $L^{inner}(\hat{\mathcal{E}},\bm{u})$ as a function of $t$, and define
  \begin{eqnarray}
    K(t) &  =  &   \mathbb{E}\Big(H(\bm{u})\big(t\mathcal{E}  + (1-t)\mathcal{E}^{\ast}\big) +  \alpha\big(G\big(t\mathcal{E}  + (1-t)\mathcal{E}^{\ast}\big) - \eta\big)   \nonumber \\
         &     & + \beta\big(\big(t\mathcal{E}  + (1-t)\mathcal{E}^{\ast}\big) - 1\big)\Big), \nonumber
  \end{eqnarray}
  it is then easy to calculate

  \begin{eqnarray}
    K^{'}(t) &  =  &  \mathbb{E}\Big(\big(H(\bm{u}) + \alpha G^{'}\big(t\mathcal{E} + (1-t)\mathcal{E}^{\ast}\big)+ \beta\big)\big(\mathcal{E} - \mathcal{E}^{\ast}\big)\Big). \nonumber
  \end{eqnarray}
  This implies then
  \begin{eqnarray}
    K^{'}(0) &  =  & \mathbb{E}\Big(\big(H(\bm{u}) + \alpha G^{'}(\mathcal{E}^{\ast})  + \beta\big)\big(\mathcal{E} - \mathcal{E}^{\ast}\big)\Big) = 0, \nonumber
  \end{eqnarray}
  since $\mathcal{E}^{\ast}$ satisfies (\ref{eqn:LagrangianeqninsideexpN1}). In addition, we know that $L^{inner}$ is convex in its first argument, thus $K$ is convex in $t$ which implies
  $t = 0$ is an optimal solution. Because $\mathcal{E}$ is arbitrary, we can not improve the value of the objective along any feasible direction from $\mathcal{E}^{\ast}$. This concludes
  that $\mathcal{E}^{\ast}$ is an optimal solution. \\

  Next, we notice that the set
  \begin{eqnarray}
    \Big\{\mathcal{E}: \mathbb{E}\big(G(\mathcal{E})\big) < \eta\Big\} \nonumber
  \end{eqnarray}
  is not empty. By Theorem 2.1. in \cite{BenTC88}, strong duality holds. This implies (see for example, pp. 242--243 in \cite{BoyV09}) that the optimal solution $\mathcal{E}^{\ast}$ and its
  corresponding $\alpha$ satisfies the following system:
  \begin{eqnarray}
    \alpha\mathbb{E}\big(G(\mathcal{E}^{\ast}) - \eta\big)  &    =   &  0, \nonumber \\
                 \mathbb{E}\big(G(\mathcal{E}^{\ast})\big)  &  \leq  &  \eta, \nonumber \\
                                                   \alpha   &    >   &  0.    \nonumber
  \end{eqnarray}
  We denote the solution $\alpha$ of this system as $\alpha^{\ast}$. Thus, we obtain
  \begin{eqnarray}\label{eqn:optimdiv}
    \mathcal{E}^{\ast}  &  =  &   \Big(\frac{\lambda}{\lambda+1}\Big(\frac{-\beta^{\ast} - H(\bm{u})}{\alpha^{\ast}}\Big) + 1\Big)^{\frac{1}{\lambda}},
  \end{eqnarray}
  where $\beta^{\ast}$ is the solution of $\mathbb{E}(\mathcal{E}^{\ast}) = 1$. With an appropriate choice of (small) $\lambda$, we can always achieve that $\mathcal{E}^{\ast}$ in (\ref{eqn:optimdiv})  is well-defined and positive. Indeed,
for the limiting case $\lambda=0$ (corresponding to the KL divergence) we have
 $$G(\mathcal{E})=\mathcal{E}\log \mathcal{E}-\mathcal{E} +1, G'(\mathcal{E})=\log(\mathcal{E})$$
  and the solution $\mathcal{E}^{\ast}=e^{\frac{-\beta^{\ast}-H(\bm{u})}{\alpha^{\ast}}}$ clearly being positive.
  As in Proposition 2.3 of  \cite{GlaX14} we can argue that by continuity we can find a set $[0,\lambda_0]$ such that $\mathcal{E}^{\ast}    =     \Big(\frac{\lambda}{\lambda+1}\Big(\frac{-\beta^{\ast} - H(\bm{u})}{\alpha^{\ast}}\Big) + 1\Big)^{\frac{1}{\lambda}}$ will stay positive for any $\lambda \in [0,\lambda_0].$\\

  Using (\ref{eqn:optimdiv}), we end up with the optimization problem
  \begin{equation}
  \label{outer}
\sup_{\bm{u} \in \mathcal{U}} J(\mathcal{E}^{\ast},\bm{u}).
 \end{equation}
 In Appendix \ref{finishing} we demonstrate that the solution to this optimization problem satisfies the system of equations  in the statement of Theorem \ref{thm:bregmansystem}.

\end{proof}

As mentioned in \hyperref[remark:klconverge]{Remark \ref{remark:klconverge}}, as $\lambda \rightarrow 0$, the chosen Bregman divergence converges to the Kullback-Leibler (KL) divergence. Indeed,
as $\lambda \rightarrow 0$, the system in Theorem \ref{thm:bregmansystem} also converges to the corresponding system of the KL divergence. This is summarized in the
following result.

\begin{remark}
We have assumed existence of $\alpha^* >0$ in Theorem \ref{thm:bregmansystem} and this  is exploited in the resulting presentation of $\mathcal{E}^*.$ It is clear, however, that the case $\alpha=0$ should  safely be excluded. Indeed if $\alpha $ was zero then the restriction about
$g$ belonging to the ball of radius $\eta$ around $f$ is ignored. Then
$L^{inner}(\mathcal{E}, \bm{u})$ implies to minimize  $\mathbb{E}(\mathcal{E}H(\bm{u})),$ where $H(\bm{u})$  is a negative random variable, under the \emph{only} restriction that $\mathbb{E}(\mathcal{E})=1 .$  This is equivalent to ask to minimize
 $\mathbb{E}_{g}H(\bm{u})$ where $\mathbb{E}_{g}$ stands for calculating the expected value under the ``arbitrary''
  actual distribution. This problem does not have a solution since for any specified $g^*$ such that $\mathbb{E}_{g^*}H(\bm{u})=A$ we can find another $\tilde{g}$ such that $\mathbb{E}_{\tilde{g}}H(\bm{u})<A$ as long as $\tilde{g}$ puts higher mass at the negative values of $H(\bm{u})$ with a large magnitude.\\

 It is also important to note that for the chosen $\lambda >0$ in Theorem \ref{thm:bregmansystem} the solution of the equation system for $\alpha^*>0, \beta^*$ and $\theta^*$ is also required to satisfy the condition $\frac{\beta^*}{\alpha^*}<1+\frac{1}{\lambda}. $ Finding precise conditions which  $\lambda >0$ must satisfy
  under a general nominal distribution seems to be very difficult. However, we can state that numerically, in all examples that we have tried, we have  observed that if certain $\lambda '$ is found which ``works'', then   all values $\lambda \in (0,\lambda ')$ also work, i.e.,  for them, respective $\alpha^*>0, \beta^*$ and $\theta^*$ satisfying the system of equations exist  with the inequality $$\frac{\beta^*}{\alpha^*}<1+\frac{1}{\lambda}$$ being satisfied.

\end{remark}

\begin{corollary}
\label{corolabel}
  Suppose that there exist $\alpha^{\ast} > 0$, $\beta^{\ast}$ and $\theta^{\ast}$ such that
  \begin{eqnarray}
                 \bm{1}^{\intercal}\theta^{\ast}
    &   =   &    \mathbb{E}\Bigg(\frac{\partial{H}}{\partial \bm{u}}\exp\Big(\frac{-\beta^{\ast} - H(\bm{u})}{\alpha^{\ast}}\Big)\Bigg),  \nonumber \\
                 \bm{1}^{\intercal}\bm{u}
    &   =   &    1,                                              \nonumber    \\
                 \mathbb{E}\big(G(\mathcal{E}^{\ast})\big)
    &   =   &    \eta,                                           \nonumber    \\
                 \mathbb{E}\big(\mathcal{E}^{\ast}\big)
    &   =   &    1,   \nonumber
  \end{eqnarray}
  where
  \begin{eqnarray}
            H(\bm{u})   &  =  &  - \big(\bm{R}^\intercal\bm{u} - B\big)^{2},  \ \ \textnormal{and} \ \
    \mathcal{E}^{\ast}     =       \exp\Big(\frac{-\beta^{\ast} - H(\bm{u})}{\alpha^{\ast}}\Big),  \nonumber
  \end{eqnarray}
  then $\bm{u}$ is an optimal index tracking portfolio.
\end{corollary}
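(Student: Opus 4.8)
The plan is to obtain the Corollary as the Kullback--Leibler ($\lambda \to 0$) specialization of Theorem \ref{thm:bregmansystem}, re-running the proof of that theorem with the KL generating function in place of the Bregman one. Concretely, I would replace $G(\mathcal{E}) = \frac{1}{\lambda}\mathcal{E}^{\lambda+1} - \frac{\lambda+1}{\lambda}\mathcal{E} + 1$ throughout by its limit $G(\mathcal{E}) = \mathcal{E}\log\mathcal{E} - \mathcal{E} + 1$, whose derivative is the particularly simple $G'(\mathcal{E}) = \log\mathcal{E}$ (both identities are already recorded inside the proof of Theorem \ref{thm:bregmansystem}, and the relevant limit $(1 + \frac{\lambda}{\lambda+1}y)^{1/\lambda} \to e^{y}$ is justified in Remark \ref{remark:klconverge}). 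Every structural step of the theorem's proof is insensitive to the particular strictly convex $G$ used, so the argument transfers essentially verbatim.

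First I would rewrite the robust problem (\ref{eqn:robust}) as $\sup_{\bm{u}\in\mathcal{U}}\inf_{\mathcal{E}\in\mathcal{B}}\mathbb{E}(\mathcal{E}H(\bm{u}))$ and form the inner Lagrangian $L^{inner}(\mathcal{E},\bm{u}) = \mathbb{E}\big(\mathcal{E}H(\bm{u}) + \alpha(G(\mathcal{E}) - \eta) + \beta(\mathcal{E}-1)\big)$ with the KL $G$. The stationarity condition, the analogue of (\ref{eqn:LagrangianeqninsideexpN1}), reads $0 = H(\bm{u}) + \alpha\log\mathcal{E} + \beta$, which inverts explicitly to $\mathcal{E}^{\ast} = \exp\big(\frac{-\beta - H(\bm{u})}{\alpha}\big)$. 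The same convexity-along-segments argument via $K(t)$, with $K'(0) = 0$ forced by stationarity and $K$ convex in $t$, certifies that this $\mathcal{E}^{\ast}$ solves the inner minimization. The Slater-type condition that $\{\mathcal{E}: \mathbb{E}(G(\mathcal{E})) < \eta\}$ is nonempty, together with \cite{BenTC88}, again yields strong duality, so complementary slackness gives $\mathbb{E}(G(\mathcal{E}^{\ast})) = \eta$ with $\alpha^{\ast} > 0$, while $\beta^{\ast}$ is fixed by $\mathbb{E}(\mathcal{E}^{\ast}) = 1$. I would emphasize that, unlike the general-$\lambda$ case, here positivity of $\mathcal{E}^{\ast}$ is automatic because $\exp(\cdot) > 0$; thus the side condition $\frac{\beta^{\ast}}{\alpha^{\ast}} < 1 + \frac{1}{\lambda}$ flagged in the remark following Theorem \ref{thm:bregmansystem} (needed to keep the base of the power nonnegative when $\lambda > 0$) becomes vacuous, and no ``small enough $\lambda$'' restriction is required.

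Finally I would substitute $\mathcal{E}^{\ast} = \exp\big(\frac{-\beta^{\ast} - H(\bm{u})}{\alpha^{\ast}}\big)$ into the outer problem (\ref{outer}), adjoin the budget constraint $\bm{1}^{\intercal}\bm{u} = 1$ with multiplier $\theta^{\ast}$, and take the first-order condition in $\bm{u}$. Differentiating under the expectation and invoking the envelope argument of Appendix \ref{finishing} — by which the derivatives of $\alpha^{\ast}$ and $\beta^{\ast}$ with respect to $\bm{u}$ do not contribute, since the inner constraints are active at the optimum — produces exactly $\bm{1}^{\intercal}\theta^{\ast} = \mathbb{E}\big(\frac{\partial H}{\partial \bm{u}}\exp(\frac{-\beta^{\ast} - H(\bm{u})}{\alpha^{\ast}})\big)$, the first equation of the Corollary. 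Together with $\bm{1}^{\intercal}\bm{u} = 1$, $\mathbb{E}(G(\mathcal{E}^{\ast})) = \eta$, and $\mathbb{E}(\mathcal{E}^{\ast}) = 1$, this is the claimed system.

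The main obstacle is this last step, the outer differentiation in $\bm{u}$: one must justify interchanging differentiation and expectation for the exponential integrand and, more delicately, control the implicit $\bm{u}$-dependence of the dual variables $\alpha^{\ast}, \beta^{\ast}$, whose clean cancellation rests on the KKT/envelope structure established in Appendix \ref{finishing}. Everything else is a routine transcription of the proof of Theorem \ref{thm:bregmansystem}. One could alternatively present the Corollary purely as the termwise $\lambda \to 0$ limit of the four equations of that theorem's system, but then one would have to verify that the limit of a family of optimality certificates is again an optimality certificate for the KL problem; the direct re-derivation above makes that verification unnecessary, which is why I would prefer it.
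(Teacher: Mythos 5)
Your proposal is correct, but it deliberately takes a different route from the paper. The paper's entire proof of Corollary \ref{corolabel} is the single sentence that it ``is obtained by taking a limit as $\lambda \to 0$ in Theorem \ref{thm:bregmansystem}'' --- i.e.\ a termwise passage to the limit in the four equations, relying on the convergence $\bigl(1 + \tfrac{\lambda}{\lambda+1}y\bigr)^{1/\lambda} \to e^{y}$ and on $G_\lambda \to \mathcal{E}\log\mathcal{E} - \mathcal{E} + 1$ from Remark \ref{remark:klconverge}. You instead re-run the proof of the theorem from scratch with the KL generating function, obtaining $G'(\mathcal{E}) = \log\mathcal{E}$ and hence $\mathcal{E}^{\ast} = \exp\bigl(\tfrac{-\beta^{\ast}-H(\bm{u})}{\alpha^{\ast}}\bigr)$ directly from the stationarity condition, then repeating the segment-convexity argument, the Ben-Tal--Teboulle--Charnes strong duality step, and the outer first-order/Hessian computation. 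What your route buys is exactly what you identify at the end: the paper's one-line limit argument implicitly assumes that a limit of optimality certificates is again an optimality certificate for the limiting problem, which is not automatic and is nowhere verified; your direct derivation sidesteps that entirely. It also makes transparent your (correct) observation that positivity of $\mathcal{E}^{\ast}$ is free in the KL case, so the ``small enough $\lambda$'' hypothesis and the side condition $\beta^{\ast}/\alpha^{\ast} < 1 + 1/\lambda$ genuinely disappear rather than merely degenerating in the limit. The only caveat is that you attribute an ``envelope argument'' to Appendix \ref{finishing}; the appendix in fact simply differentiates the outer Lagrangian holding $\alpha^{\ast},\beta^{\ast}$ fixed without discussing their implicit $\bm{u}$-dependence, so you are flagging (and would need to supply) a justification the paper itself omits --- a point in your favour rather than a gap in your argument.
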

\noindent The proof of the Corollary is obtained by taking a limit as $\lambda \to 0$ in Theorem \ref{thm:bregmansystem}.
\begin{remark}
The main part of the numerical procedure of our method is the implementation of the solution of the nonlinear equation system from Theorem \ref{thm:bregmansystem}. We used MATLAB and applied the interior-point method for solving box-constrained nonlinear systems. Initial guesses for the solution were necessary to be chosen for the portfolio's weights, and  for the parameters $\alpha, \beta$ and $\theta. $ Uniform initial weights turned out to be working fine all the time. The initial weights for the remaining parameters required more careful choosing and experimentation. The ones that worked fine for our numerical experiments were $\alpha =0.02, \beta=0.01$ and $\theta =-0.05.$
\end{remark}

\section{Modified myopic robust index tracking}
\label{emrit}

The discussion in Section \ref{sec:semianalyopall} can be extended to cover a specific form of modified myopic robust index tracking problem. In the previous section we measured the quality of the index tracking by using the quadratic loss function since this is the typical choice in the portfolio tracking literature. This choice equally penalizes  the performance of the portfolio whenever it deviates by the same magnitude irrespectively of  whether the deviation is above or below the value of the index.\\

The main focus in \cite{dePdC16} is on formulating an optimization problem that represents  a balancing of the trade-off between tracking error and excess return.
A different goal may be of interest in a robust setting. Typically, in the latter setting, the goal is to safeguard against worst-case scenarios and the solution obtained reflects this goal. Hence it is expected to give superior performance, especially in a downturn market. If for various reasons the investor still remains in the market during a downturn (for example, expecting that this downturn would be relatively short-lived, or because of limited liquidity),  one would not be willing to penalize if the portfolio outperforms the index in such cases.
 Obviously, a more reasonable choice to replace the loss $\ell (x)=x^2$ to be used in such a situation would be  based, for example, on a smooth approximation of the function $\ell_1(x)= x^2 \hbox{ if } x>0 \hbox { and 0 else}.$
Other choices also make sense, for example $\ell_2(x)=[x]_{+} .$ Direct utilization of these types of functions makes a lot of sense since we do not really want to penalize  when the portfolio happens to outperform the index.\\

 However, there is a technical difficulty to overcome if we want to include such type of losses in our approach. It is related to the fact that the functions
$\ell_1$ and $\ell_2$ are not smooth at the origin.
If we would like to utilize the steps as in Theorem \ref{thm:bregmansystem} and show that the Hessian is negative semi-definite,  we need a convex twice differentiable loss function $\ell_i(x), i=1,2$ to replace $\ell(x). $
Also, from a technical prospective, the gradient of $H$  should be possible to calculate, preferably in a closed form.
 We suggest the function $\tilde{\ell}_1(x)=\frac{1}{\epsilon}\int_0^{\infty}\phi(\frac{1}{\epsilon}(x-t))\ell_1(t)dt$
 with a suitably chosen small $\epsilon >0$  as  approximation for $\ell_1 (x).$ For approximation of $\ell_2(x),$ the expression $\tilde{\ell}_2(x)=x+\epsilon \log (1+e^{-x/\epsilon})$ from the literature (see e.g., \cite{CM95}) can be used and
 is known as  ``the neural networks smooth plus function''. Using these, the function
  $H(\bm{u})= - \big(\bm{R}^\intercal\bm{u} - B\big)^{2}=- \big(B-\bm{R}^\intercal\bm{u} \big)^{2}$ in Theorem \ref{thm:bregmansystem} can be replaced by
  $\tilde{H}_1(\bm{u})= - \tilde{\ell}_1(B-\bm{R}^\intercal\bm{u})$ or by $\tilde{H}_2(\bm{u})= - \tilde{\ell}_2(B-\bm{R}^\intercal\bm{u} ),$ respectively. The corresponding gradient of $H$ is to be replaced by the gradient of  $\tilde{H}_1$ or $\tilde{H}_2$ and these
  are easily calculated by using the chain rule and the derivatives of one-dimensional argument for $\tilde{\ell_1}(t)$ and $\tilde{\ell_2}(t).$
   Both derivatives of $\tilde{H}_1$ or $\tilde{H}_2$ w.r.t. the components of $\bm{u}$  deliver smooth approximating functions.  We prefer the first approximation since its second mixed derivatives appear to be varying more smoothly around the origin for small values of $\epsilon .$
Elementary calculation of the integral gives the following approximations for the function $\tilde{\ell}_1(x)$ :
\begin{equation}
\label{smoothell1}
\tilde{\ell}_1(x)=x^2\Phi(\frac{x}{\epsilon})+2x\epsilon\phi(\frac{x}{\epsilon})+\epsilon^2[\frac{1}{2}+\frac{1}{2}Erf(\frac{x}{\sqrt{2}\epsilon})-
\frac{x}{\epsilon}\phi(\frac{x}{\epsilon})]
\end{equation}
Here $\Phi(\cdot )$ denotes the cumulative distribution function of the univariate standard normal distribution, $\phi(\cdot )$ denotes the density and the $Erf(\cdot )$ function is defined as $Erf(x)=\frac{2}{\sqrt{\pi}}\int_0^x e^{-t^2}dt.$ Having in mind  the relationship $\frac{1}{2}+\frac{1}{2}Erf(\frac{x}{\sqrt{2}\epsilon})=\Phi(\frac{x}{\epsilon}),$ (\ref{smoothell1}) simplifies further to the explicit expression
\begin{equation}
\label{smoothell1simple}
\tilde{\ell}_1(x)=(x^2+\epsilon^2)\Phi(\frac{x}{\epsilon})+x\epsilon\phi(\frac{x}{\epsilon}).
\end{equation}

Differentiating (\ref{smoothell1})  delivers the resulting approximations for the derivatives of  $\tilde{\ell}_1(x)$ :
\begin{equation}
\tilde{\ell}_1'(x)=2x\Phi(\frac{x}{\epsilon})+2\epsilon \phi(\frac{x}{\epsilon}), \hbox{    } \tilde{\ell}_1''(x)=2\Phi(\frac{x}{\epsilon}).
\end{equation}
Of course, the approximations for the derivatives of $\tilde{\ell}_2(x)$ are:
$$
\tilde{\ell}_2'(x)=\frac{1}{1+\exp(-x/\epsilon)} , \hbox{   } \tilde{\ell}_2''(x)=\frac{1}{\epsilon}\frac{e^{x/\epsilon}}{(1+e^{x/\epsilon})^2}.
$$

\section{Numerical Analysis}
\label{numerics}
In this section, we perform various numerical comparisons to illustrate the usefulness of our model.
It is worth noting that there is  some difference between the general theory (in particular, the statements in Section \ref{myopic})  and the way to illustrate the theory via examples. We stress that, as seen in Section \ref{myopic}, the main theoretical statement in Theorem \ref{thm:bregmansystem} on the basis of which  our numerical procedure is  implemented, does not explicitly require calculation of the least favourable distribution  in the Bregman ball; all that is needed is the radius $\eta$ of the ball.  If we wanted to  illustrate the full effect of the robustification theory on a particular example, we should ideally be able to calculate the least favourable distribution and simulate from it. Determining the least favourable distribution is  very difficult  even if the nominal distribution was multivariate normal. On the other hand, we know that the least favourable distribution is on the surface of the ball (since the Lagrange multiplier $\alpha $ is not equal to zero). Hence,  just for the purpose of generating illustrative examples, we have chosen a distribution that has the  maximal allowed divergence from the nominal {\it and}  is possible to deal with (e.g., in Example 5.1, we choose it to be multivariate normal with the same covariance matrix as the nominal but with a re-scaled mean). Of course, this distribution is not necessarily the least favourable  though it is on the maximal allowable distance from the nominal distribution.  This approach allows us to simulate our toy examples.  We follow  traditional approaches in the robustness literature where a contaminated neighbourhood is typically a neighbourhood of the multivariate normal distribution, e.g., \cite{HubRon2009}. For this reason, we choose our testing of the statements of Theorem \ref{thm:bregmansystem} to involve  perturbations of multivariate normal and of multivariate {\tt{t}}. We are aware of the fact that these simulations represent a simplistic indicative study of the effect of the general statement of  Theorem \ref{thm:bregmansystem}. A more comprehensive numerical work would be required to test our procedure by examining Bregman neighbourhoods of other nominal distributions of interest in finance. As this paper is more on the methodological side, such numerical work is beyond the scope of the presented research. We hope that the reader will still be able to get some insight about the usefulness of our approach by examining the presented numerical examples.

One more point to make is that the least favourable distribution corresponds to a pessimistic scenario that may not be the one that would actually happen hence our examples, by not directly simulating from it, hopefully still  give useful insight in the performance of the robust procedure.

\subsection{Performance Comparison via simulation: Index Tracking}
 We first compare the performance of the robust  and the non-robust portfolio in the context of index tracking.
Suppose that we have an index which is made up of five assets according to the following weight vector:

\begin{eqnarray}\label{eqn:empmeanandcov}
   \bm{w} = \left(\begin{array}{c}  0.15  \\  0.20  \\ 0.20  \\ 0.15 \\ 0.30 \end{array}\right).   \nonumber
\end{eqnarray}
The expected return and the covariance matrix of these five assets are given by
\begin{eqnarray}\label{eqn:empmeanandcov}
   &    & \tilde{\bm{\mu}} = \left(\begin{array}{c}     0.0025  \\  0.0035  \\ 0.0010  \\   0.0005 \\  0.0045 \end{array}\right), \ \
          \tilde{\bm{\Sigma}} = \left(\begin{array}{ccccc}  0.0020  &  0.0000  &  0.0000  &  0.0000  &  0.0000   \\
                                                                 0.0000  &  0.0025  &  0.0000  &  0.0000  &  0.0000    \\
                                                                 0.0000  &  0.0000  &  0.0012  &  0.0000  &  0.0000     \\
                                                                 0.0000  &  0.0000  &  0.0000  &  0.0001  &  0.0000     \\
                                                                 0.0000  &  0.0000  &  0.0000  &  0.0000  &  0.0033
                                  \end{array}\right).
\end{eqnarray}
We will use the first four assets to track this hypothetical index. The following example is used to demonstrate  the comparison.\\

\noindent \textbf{Example} [Multivariate Normal (MVN)].
 Suppose that the nominal distribution is a $\ell$-dimensional multivariate normal distribution $N_{\ell}(\bm{\mu}_{1},\bm{\Sigma}_{1})$ and an actual distribution is an  $\ell$-dimensional multivariate normal distribution $N_{\ell}(\bm{\mu}_{2},\bm{\Sigma}_{2})$ which is on a Bregman divergence $\eta$ from the nominal. Fix $\lambda = 0.1$, we  assume $\bm{\Sigma}_{2} = \bm{\Sigma}_{1}$ and take $\eta = 0.1, 0.2, 0.5, 0.8, 1.0, 2.0, 5.0$, which results in $\bm{\mu}_{2} = k\bm{\mu}_{1}$ for some $k(\eta) \in \mathbb{R}$.\\

We  simulate $5,000,000$ returns from the nominal distribution to calculate the robust portfolio. The same number of simulations is used to draw samples from the actual distribution to make a comparison between the robust and non-robust portfolios. Several measures are used to accomplish the comparison. The first measure is the number of times (in percentage) that the robust case outperforms the non-robust case. We call this measure the Beating Time (BT). The larger the BT, the more times the robust portfolio  outperforms the non-robust one. The out-performance is in the sense of a lower tracking error, where the tracking error (TE) is defined as
\begin{eqnarray}
  TE  &  =  &   \big(\bm{u}^{\intercal}\bm{R} - B\big)^2 = \big(\bm{u}^{\intercal}(\bm{1}+\bm{r}) - B\big)^2.  \nonumber
\end{eqnarray}
Here $\bm{u}$ is a control (either robust or non-robust) applied to the tracking portfolio, $\bm{r}$ denotes the return of the assets in the tracking portfolio, and $B$ denotes the return of the index.
The second measure we apply to both controls is the expected tracking error (ETE). Obviously, the smaller the expected tracking error, the better the portfolio  (on average). We also  compare
the performance of the robust and non-robust portfolios with the index, and introduce a measure called the expected excess of index (EEI), where the excess of index (EI) is defined as
\begin{eqnarray}
    EI  &  =  &   \bm{u}^{\intercal}(\bm{1}+\bm{r}) - B.  \nonumber
\end{eqnarray}
Thus, a negative EI indicates the portfolio is beaten by the index. The EEI is the average of EI over the number of simulations performed. \\

\begin{table}[h]
\centering
\caption{Tracking performance: robust optimal solution (Bregman) versus non-robust optimal solution, $\lambda=0.1, k<0 $}\label{table:t1a1}
\begin{tabular}{cccccccc}
\hline
 \multicolumn{1}{c}{$\eta$}  &      BT      &            \multicolumn{3}{c}{ETE}          &             \multicolumn{3}{c}{EEI}        \\
\cline{3-5}\cline{6-8}
                             &              &   robust   &   non-robust &    difference   &   robust   &    non-robust   &  difference  \\
                             &     $\%$     &      \multicolumn{3}{c}{($*10^{-4}$)}       &       \multicolumn{3}{c}{($*10^{-4}$)}      \\
\cline{1-8}
    0.1  ($k =   -2.2158$)   &     50.51    &    3.1055  &     3.1055   &      0.0000     &   24.8692  &     24.8728    &   -0.0036   \\
    0.2  ($k =   -3.5366$)   &     51.67    &    3.2016  &     3.2017   &     -0.0001     &   39.7446  &     39.7556    &   -0.0110    \\
    0.5  ($k =   -6.1208$)   &     53.99    &    3.5174  &     3.5180   &     -0.0006     &   68.8326  &     68.8753    &   -0.0427    \\
    0.8  ($k =   -7.9434$)   &     55.25    &    3.8417  &     3.8431   &     -0.0014     &   89.3307  &     89.4115    &   -0.0808    \\
    1.0  ($k =   -8.9526$)   &     55.91    &    4.0573  &     4.0594   &     -0.0021     &  100.6753  &    100.7832    &   -0.1079    \\
    2.0  ($k =  -12.7653$)   &     58.25    &    5.1028  &     5.1099   &     -0.0071     &  143.4931  &    143.7447    &   -0.2516    \\
    5.0  ($k =  -19.5278$)   &     61.51    &    7.8510  &     7.8812   &     -0.0302     &  219.2462  &    219.9450    &   -0.6988   \\
\hline
\end{tabular}
\end{table}

\begin{table}[h]
\centering
\caption{Tracking performance: robust optimal solution (Bregman) versus non-robust optimal solution, $\lambda=0.1, k>0$}\label{table:t1a2}
\begin{tabular}{cccccccc}
\hline
 \multicolumn{1}{c}{$\eta$}  &      BT      &            \multicolumn{3}{c}{ETE}          &             \multicolumn{3}{c}{EEI}        \\
\cline{3-5}\cline{6-8}
                             &              &   robust   &   non-robust &    difference   &   robust   &    non-robust   &  difference  \\
                             &     $\%$     &      \multicolumn{3}{c}{($*10^{-4}$)}       &       \multicolumn{3}{c}{($*10^{-4}$)}      \\
\cline{1-8}
    0.1  ($k =    4.2158$)   &     52.10    &    3.2702  &     3.2702   &      0.0000     &  -47.5911  &     -47.5978    &    0.0067   \\
    0.2  ($k =    5.5366$)   &     54.20    &    3.4338  &     3.4340   &     -0.0002     &  -62.4633  &     -62.4805    &    0.0172    \\
    0.5  ($k =    8.1208$)   &     56.89    &    3.8817  &     3.8827   &     -0.0010     &  -91.5435  &     -91.6003    &    0.0568    \\
    0.8  ($k =    9.9434$)   &     58.00    &    4.2989  &     4.3011   &     -0.0022     & -112.0351  &    -112.1365    &    0.1014    \\
    1.0  ($k =   10.9526$)   &     58.56    &    4.5659  &     4.5691   &     -0.0032     & -123.3760  &    -123.5082    &    0.1322    \\
    2.0  ($k =   14.7653$)   &     60.64    &    5.8053  &     5.8149   &     -0.0096     & -166.1784  &    -166.4697    &    0.2913    \\
    5.0  ($k =   21.5278$)   &     63.36    &    8.8956  &     8.9325   &     -0.0369     & -242.6700  &    -241.8988    &    0.7712    \\
\hline
\end{tabular}
\end{table}

In this example, both the nominal and the actual distribution are MVN. From (\ref{eqn:bregdivsimp}), it is easy to see that the Bregman divergence between these two distributions is given by
\begin{eqnarray}
             \eta
   &  =  &   \frac{1}{\lambda}\exp\Big(\frac{\lambda(\lambda+1)}{2}(1-k)^2\bm{\mu}_{1}^{\intercal}\bm{\Sigma}^{-1}_{1}\bm{\mu}_{1}\Big)
           - \frac{1}{\lambda}.   \nonumber
\end{eqnarray}
This then implies
\begin{eqnarray}
  k  &  =  &   1 \pm \sqrt{\frac{\log(\eta\lambda+1)}{\frac{\lambda(\lambda+1)}{2}\bm{\mu}_{1}^{\intercal}\bm{\Sigma}^{-1}_{1}\bm{\mu}_{1}}},   \nonumber
\end{eqnarray}
which allows us to get the relevant $k$. \\

 If we take
\begin{eqnarray}
  k  &  =  &   1 - \sqrt{\frac{\log(\eta\lambda+1)}{\frac{\lambda(\lambda+1)}{2}\bm{\mu}_{1}^{\intercal}\bm{\Sigma}^{-1}_{1}\bm{\mu}_{1}}},   \nonumber
\end{eqnarray}
we obtain
Table \ref{table:t1a1}, and in the other case,
we obtain
Table \ref{table:t1a2}. \\

It can be seen that  in both Table \ref{table:t1a1} and Table \ref{table:t1a2}, the robust portfolio
outperforms the non-robust one when BT or ETE is used as a comparison measure. In contrast, when EEI is used, if there is a loss made, i.e., the portfolio underperforms the index, the robust portfolio safeguards and performs better. This leads to a positive difference  in
the last column of Table \ref{table:t1a2}. When there is a profit made, the opposite happens and a negative difference
is recognized as shown in  Table \ref{table:t1a1}.\\

Recall that the parameter $\lambda$ controls the amount of robustness applied: the smaller the $\lambda$, the less robustness effect. 
This belief is confirmed
from the results obtained in Table \ref{table:t1a3} and Table \ref{table:t1a4} when $\lambda$ is
taken to be 0.05. Attention should be directed at comparing the pairs: Table \ref{table:t1a3} with Table \ref{table:t1a1}, and Table \ref{table:t1a4} with Table \ref{table:t1a2}, respectively. It becomes apparent that, when the ball radius $\eta$ is small (hence no need of significant robustification), the performance is about the same no matter whether $\lambda =0.05 $ or $\lambda=0.1$ was used.
However, when $\eta$ is increased to, say, 2 or 5, more robustification is required and using the higher value of $\lambda=0.1$ proves to bring a higher percentage of BT.

\begin{table}[h]
\centering
\caption{Tracking performance: robust optimal solution (Bregman) versus non-robust optimal solution, $\lambda=0.05, k<0$}\label{table:t1a3}
\begin{tabular}{cccccccc}
\hline
 \multicolumn{1}{c}{$\eta$}  &      BT      &            \multicolumn{3}{c}{ETE}          &             \multicolumn{3}{c}{EEI}        \\
\cline{3-5}\cline{6-8}
                             &              &   robust   &   non-robust &    difference   &   robust   &    non-robust   &  difference  \\
                             &     $\%$     &      \multicolumn{3}{c}{($*10^{-4}$)}       &       \multicolumn{3}{c}{($*10^{-4}$)}      \\
\cline{1-8}
    0.1  ($k =   -2.2955$)   &     50.63    &    3.1101  &     3.1101   &      0.0000     &   25.7668  &     25.7716     &   -0.0048   \\
    0.2  ($k =   -3.6548$)   &     51.36    &    3.2124  &     3.2125   &     -0.0001     &   41.0720  &     41.0879     &   -0.0159    \\
    0.5  ($k =   -6.3327$)   &     52.38    &    3.5506  &     3.5515   &     -0.0009     &   71.1951  &     71.2628     &   -0.0677    \\
    0.8  ($k =   -8.2414$)   &     53.09    &    3.9019  &     3.9043   &     -0.0024     &   92.6362  &     92.7703     &   -0.1341    \\
    1.0  ($k =   -9.3074$)   &     53.53    &    4.1379  &     4.1416   &     -0.0037     &  104.5987  &    104.7813     &   -0.1826    \\
    2.0  ($k =  -13.4063$)   &     55.27    &    5.3099  &     5.3228   &     -0.0129     &  150.5243  &    150.9677     &   -0.4434    \\
    5.0  ($k =  -21.0432$)   &     58.17    &    8.6066  &     8.6615   &     -0.0549     &  235.8081  &    237.0205     &   -1.2124   \\
\hline
\end{tabular}
\end{table}

\begin{table}[h]
\centering
\caption{Tracking performance: robust optimal solution (Bregman) versus non-robust optimal solution, $\lambda=0.05, k>0$}\label{table:t1a4}
\begin{tabular}{cccccccc}
\hline
 \multicolumn{1}{c}{$\eta$}  &      BT      &            \multicolumn{3}{c}{ETE}          &             \multicolumn{3}{c}{EEI}        \\
\cline{3-5}\cline{6-8}
                             &              &   robust   &   non-robust &    difference   &   robust   &    non-robust   &  difference  \\
                             &     $\%$     &      \multicolumn{3}{c}{($*10^{-4}$)}       &       \multicolumn{3}{c}{($*10^{-4}$)}      \\
\cline{1-8}
    0.1  ($k =    4.2955$)   &     52.52    &    3.2788  &     3.2789   &     -0.0001     &  -48.4876  &    -48.4966    &    0.0090   \\
    0.2  ($k =    5.6548$)   &     53.38    &    3.4506  &     3.4509   &     -0.0003     &  -63.7882  &    -63.8128    &    0.0246    \\
    0.5  ($k =    8.3327$)   &     54.05    &    3.9254  &     3.9270   &     -0.0016     &  -93.8985  &    -93.9878    &    0.0893    \\
    0.8  ($k =   10.2414$)   &     54.70    &    4.3738  &     4.3776   &     -0.0038     & -115.3283  &   -115.4953    &    0.1670    \\
    1.0  ($k =   11.3074$)   &     55.10    &    4.6639  &     4.6695   &     -0.0056     & -127.2841  &   -127.5062    &    0.2221    \\
    2.0  ($k =   15.4063$)   &     56.78    &    6.0434  &     6.0606   &     -0.0172     & -173.1823  &   -173.6927    &    0.5104    \\
    5.0  ($k =   23.0432$)   &     59.40    &    9.7240  &     9.7904   &     -0.0664     & -258.4161  &   -259.7455    &    1.3294   \\
\hline
\end{tabular}
\end{table}

\subsection{Performance Comparison via simulation during market downturn}

 In this section, we illustrate the effect of using the loss function $\tilde{\ell}_1(.) $ from Section \ref{emrit} on two examples: the first  example involves the multivariate normal as a nominal distribution and the second example deals with multivariate $t$ as a nominal distribution.\\

\noindent \textbf{Example}  [Nominal multivariate normal].
 Given that all components of the chosen mean vector of the multivariate normal are positive, the single scalar multiplication with a value of $k<1$ represents  a market downturn scenario.   As $k$ is related to the radius $\eta ,$ a larger value of $\eta $ pushes $k$ further in the negative territory. The number of simulations used to calculate the robust portfolio and to assess the performance was kept at $1,000,000$ as a sufficient stabilization of the results was already appearing at this number of simulations.
We applied the smoothed loss function  $\tilde{\ell}_1(.) $ from Section \ref{emrit} with $\epsilon = 0.01$ and $\lambda =0.1.$ We varied the radius $\eta$ through the range $0.1, 0.2, 0.5, 0.8, 1, 2, 5$ as before and registered the percentage of cases in which the Bregman-based portfolio outperformed the non-robust one. (The non-robust portfolio was defined as minimizing the same loss $\tilde{\ell}_1(x)$ but without considering a neighbourhood around the nominal distribution).\\

 As expected, the percentage of cases in which the robust portfolio was not worse than the non-robust one was quite large.    The results are presented in Table \ref{table:t5}.\\

Similar results to the ones presented in Table \ref{table:t5} can be  obtained when $\tilde{\ell}_2(.)$ was used as a smoothed loss function.
The results clearly outline the significant benefits of using the robust portfolio in a market downturn scenario. Of course, this is to be expected by the nature of the optimization problem that is solved in the robust setting. We note that in  Table \ref{table:t5} the cases where both  portfolios deliver a zero value for the loss $\tilde{\ell}_1(\cdot)$ have been counted towards the percentage BT (since these cases are considered as ``not worse'' for the robust portfolio). One may suggest that it might be  fairer to  exclude theses cases from the comparison (i.e., to consider in what proportion of cases the robust portfolio delivered a truly better outcome). It would be expected that this proportion would be smaller but still high enough. Indeed this expectation is confirmed by the results that are presented in Table \ref{table:t6}.
\begin{center}
\begin{table}[h]
\centering
\caption{Tracking performance using the loss $\tilde{\ell}_1$: robust optimal solution versus non-robust optimal solution (MVN) $\lambda = 0.1, \epsilon=0.01$ (cases of both losses being zero included).}\label{table:t5}
\begin{center}
\begin{tabular}{cccccccc}
\hline
 \multicolumn{1}{c}{$\eta$}  &      BT               &         \multicolumn{3}{c}{ETE}         \\
\cline{3-5}
                             &                &   robust  &   non-robust      &  difference  \\
                             &     $\%$       &       \multicolumn{3}{c}{($*10^{-4}$)}  \\
\cline{1-5}
    0.1  ($k =    -2.2158$)  &     81.24      &   58.5587  &    58.9210  &    -0.3623    \\
    0.2  ($k =    -3.5366$)  &     84.06      &   52.3295  &    52.9541  &    -0.62464    \\
    0.5  ($k =    -6.1208$)  &     88.65      &   41.2324  &    42.3339  &    -1.1015    \\
    0.8  ($k =    -7.9434$)  &     91.46      &   34.5631  &    36.0577  &   -1.4946    \\
    1.0  ($k =    -8.9526$)  &     92.72      &  31.1530  &   32.8128  &    -1.6598    \\
    2.0  ($k =   -12.7653$)  &     96.28      &  20.3921 &   22.4897  &    -2.0976    \\
    5.0  ($k =   -19.5278$)  &     99.11      &  8.36022  &   10.5283  &   -2.1681    \\
\hline
\end{tabular}
\end{center}
\end{table}
\end{center}
\begin{center}
\begin{table}[h]
\centering
\caption{Tracking performance using the loss $\tilde{\ell}_1$: robust optimal solution versus non-robust optimal solution (MVN) $\lambda = 0.1, \epsilon=0.01$(cases of both losses being zero excluded).}\label{table:t6}
\begin{center}
\begin{tabular}{cccccccc}
\hline
 \multicolumn{1}{c}{$\eta$}  &           BT          &      \multicolumn{3}{c}{ETE}              \\
\cline{3-5}
                             &                       &   robust   &   non-robust &   difference  \\
                             &          $\%$         &       \multicolumn{3}{c}{($*10^{-4}$)}    \\
\cline{1-5}
    0.1  ($k =    -2.2158$)  &         58.34         &   58.5587  &    58.9210   &    -0.3623     \\
    0.2  ($k =    -3.5366$)  &         62.00         &   52.3295  &    52.9541   &    -0.6246     \\
    0.5  ($k =    -6.1208$)  &         68.65         &   41.2324  &    42.3339   &    -1.1015     \\
    0.8  ($k =    -7.9434$)  &         72.98         &   34.5631  &    36.0577   &    -1.4946     \\
    1.0  ($k =    -8.9526$)  &         75.20         &   31.1530  &    32.8128   &    -1.6598     \\
    2.0  ($k =   -12.7653$)  &         82.69         &   20.3921  &    22.4897   &    -2.0976     \\
    5.0  ($k =   -19.5278$)  &         91.66         &   8.36022  &    10.5283   &    -2.1681     \\
\hline
\end{tabular}
\end{center}
\end{table}
\end{center}

\noindent \textbf{Example}  [Multivariate {\tt{t}} (MVT)].
 Suppose now that  the nominal distribution is a $\ell$-dimensional multivariate t distribution $t_{\ell}(\bm{\mu}_{1},\bm{\Sigma}_{1},\nu_{1})$ and an actual distribution is taken to be an
 $\ell$-dimensional multivariate t distribution $t_{\ell}(\bm{\mu}_{2},\bm{\Sigma}_{2},\nu_{2})$
 such that $\bm{\mu}_{2} = k\bm{\mu}_{1}$, $\bm{\Sigma}_{2} = \bm{\Sigma}_{1}$
 for some $k \in \mathbb{R}$.\\

First, we note that a multivariate $t$ distribution $t_{l}(\bm{\mu},\bm{\Sigma},\nu)$ has a density (see for example, \cite[p99]{NadK08}):
\begin{eqnarray}
\label{tdist}
 f(\bm{x}) & = & \frac{\Gamma(\frac{l+\nu}{2})}{(\pi\nu)^\frac{l}{2}\Gamma(\frac{\nu}{2})|\Sigma|^\frac{1}{2}}\Big(1+\frac{1}{\nu}(\bm{x}-\bm{\mu})^{\intercal}\Sigma^{-1}(\bm{x}-\bm{\mu})\Big)^{-\frac{(l+\nu)}{2}}.
\end{eqnarray}
We remind the reader that  the matrix $\Sigma $ in (\ref{tdist}) is {\it not} the covariance matrix of the multivariate $t$ distribution, but the covariance matrix is defined for every $\nu >2$ and can be expressed as $\frac{\nu}{\nu -2}\Sigma.$ \\

We again applied the smoothed loss function $\tilde{\ell}_1(.)$ from Section \ref{emrit} with $\epsilon = 0.01$ and $\lambda =0.1.$ We present results below for the case of 10 degrees of freedom but we experimented with many other values for the degrees of freedom and the results follow the same pattern. We varied the values of $k$ through the range $1, 0, -1, -2, -3, -4, -5, -8$ and calculated the resulting  radius $\eta $ numerically. The portfolio weights generally stabilized with fewer than  the 1,000,000 simulations we performed. As before, we registered the percentage of cases in which the Bregman-based portfolio was not worse than the  non-robust portfolio with respect to the loss $\ell_1(.)$ . (The non-robust portfolio was defined as minimizing the same loss $\ell_1(x)$ but without considering a neighbourhood around the nominal distribution.)
 The results are summarized in Table \ref{table:t7} and similar results  can be  obtained when $\tilde{\ell}_2(.)$ is used as a smooth loss function.\\

As before, an additional table is provided for the market downturn scenario, where we exclude ``zero'' loss results where the robust and non-robust strategies performed equally. As shown in Table \ref{table:t8}, the proportion BT was again in favor of the robust portfolio.\\


The results of this section clearly outline the significant benefits of using the robust tracking portfolio in a market downturn scenario also for the case where the nominal distribution is heavy-tailed (such as the multivariate $t$ with 10 degrees of freedom). When $k=1,$ the radius $\eta$ is zero and the robust and non-robust strategies coincide, hence, up to a negligible numerical effect, the percentage was about $75\%$ in Table \ref{table:t7} and about   $50\%$ in Table \ref{table:t8}. As $k$ starts getting smaller, the advantage of the robust approach starts popping up and is increasing monotonically when the  magnitude of $k$ increases.
\begin{center}
\begin{table}[h]
\centering
\caption{Tracking performance using the loss $\tilde{\ell}_1$: robust optimal solution versus non-robust optimal solution (MVT) $\lambda = 0.1, \epsilon=0.01, \nu=10 $ (cases where both losses are zero are included).}\label{table:t7}
\begin{center}
\begin{tabular}{cccccccc}
\hline
\multicolumn{1}{c}{k, $\eta$}&     BT      &         \multicolumn{3}{c}{ETE}          \\
\cline{3-5}
                             &             &   robust  &  non-robust  &  difference   \\
                             &    $\%$     &       \multicolumn{3}{c}{($*10^{-4}$)}   \\
\cline{1-5}
            $k =  1$         &   74.0965  &   72.9088  &   72.9100    &  -0.0012      \\
            $k =  0$         &   75.6374  &   67.4686  &   67.5311    &  -0.0625      \\
            $k = -1$         &   77.1499  &   62.0974  &   62.4280    &  -0.3306      \\
            $k = -2$         &   78.5679  &   56.8677  &   57.5992    &  -0.7315      \\
            $k = -3$         &   80.1557  &   51.8374  &   53.0417    &  -1.2043      \\
            $k = -4$         &   81.8112  &   47.0439  &   48.7512    &  -1.7073      \\
            $k = -5$         &   83.5051  &   42.5242  &   44.7248    &  -2.2006      \\
            $k=-8$           &    88.1117        &      30.7946      &     34.1600         &   -3.3655           \\
 \hline
\end{tabular}
\end{center}
\end{table}
\end{center}

\begin{center}
\begin{table}[h]
\centering
\caption{Tracking performance using the loss $\tilde{\ell}_1$: robust optimal solution versus non-robust optimal solution (MVT) $\lambda = 0.1, \epsilon=0.01, \nu=10 $ (cases of both losses being zero excluded).}\label{table:t8}
\begin{center}
\begin{tabular}{cccccccc}
\hline
\multicolumn{1}{c}{k, $\eta$}&      BT        &         \multicolumn{3}{c}{ETE}           \\
\cline{3-5}
                             &                &   robust  &   non-robust   &  difference  \\
                             &     $\%$       &       \multicolumn{3}{c}{($*10^{-4}$)}    \\
\cline{1-5}
            $k =  1$         &     50.77      &  72.9088  &    72.9100     &  -0.0012     \\
            $k =  0$         &     51.55      &  67.4686  &    67.5311     &  -0.0625     \\
            $k = -1$         &     52.66      &  62.0974  &    62.4280     &  -0.3306     \\
            $k=  -2$         &     53.50      &  56.8677  &    57.5992     &  -0.7315     \\
            $k=  -3$         &     54.53      &  51.8374  &    53.0417     &  -1.2043     \\
            $k = -4$         &     55.68      &  47.0439  &    48.7512     &  -1.7073     \\
            $k=  -5$         &     56.93      &  42.5242  &    44.7248     &  -2.2006     \\
              $k=-8$           &    60.72        &    30.7946        &       34.1600         &      -3.3655           \\
              \hline
\end{tabular}
\end{center}
\end{table}
\end{center}

\subsection{Real data example}
The simulation method offers the right vehicle to examine the merits of our methodology. This is because only via simulations (where we know the nominal, the actual  model and the true radius of contamination) we can sense the effects of our methodology. However, an indirect way to investigate these effects would be to investigate the out-of-sample performance on some real data.
We are not aware of another method in the literature for robust index tracking (in the sense we used it here as robustness
to deviations from a nominal model), with which to compare. Hence we illustrate the performance of our method on data from the Hang Seng index (Hong Kong).
This index contains 31 stocks.  The paper \cite{BeaMC03} contains many data sets that have been refereed to since then in other papers. In particular, \cite{GuaS12}, too, uses these data sets. The description of the data sets and a way to access it can be found on page 60 of \cite{GuaS12}.
 We are following the same design as done in \cite{GuaS12},  and apply the same experimental setting as described in Table 1 in their paper. For each stock, 291 consecutive weekly prices are provided. From these, the first 104 weekly prices are chosen as in-sample observations, i.e. the initial time period is $T = 104.$  The the next $52$ weeks are set to be the out-of-sample (or validation) period.
 \\

Based on the testing data, we used the MATLAB procedure {\tt{stepwiseglm}} to extract 12 stocks to include in our portfolio. This procedure creates a linear regression model
using the stepwise regression approach by adding or removing predictors from the set of 31 stocks to model the response (i.e., the index value in our case).
Admittedly, this may not be the best way to choose stocks to be included in the index tracking portfolio but it is a reasonable way of doing it and, as discussed in the introduction, we are unable to discuss rigorously the problem of selecting the ``optimal subset'' selection in this paper. As a result of applying the MATLAB  procedure, we obtained stocks with indices
$4, 11, 12, 13, 15, 18, 21, 22, 23, 25, 26, 27$ which we included in our portfolio. \\

 The Table \ref{table:t9} below gives the estimated correlations matrix of the returns of the chosen 12 stocks (columns 1 to 12) and the return of the index (column 13), rounded by two decimal places.

\begin{table}[h]
\centering
\caption{Estimated correlation matrix between the returns of the chosen 12 stocks (columns 1-12) and the index (column 13)}\label{table:t9}
\begin{center}
\begin{tabular}{ccccccccccccc}
\hline
1&          0.60 &       0.66&      0.69&        0.64& 	0.76&	0.83&	0.61&	0.31&	0.70&	0.56&	0.69&	0.82\\
0.60&	1&	0.63&	0.64&	0.51&	0.65&	0.67&	0.67&	0.32&	0.59&	0.47&	0.57&	0.77\\
0.66&	0.63&	1&	0.80&	0.67&	0.74&	0.73&	0.69&	0.35&	0.67&	0.67&	0.76&	0.86\\
0.69&	0.64&	0.80&	1&	0.70&	0.81&	0.75&	0.73&	0.33&	0.73&	0.68&	0.78&	0.89\\
0.64&	0.51&	0.67&	0.70&	1&	0.71&	0.65&	0.64&	0.37&	0.64&	0.69&	0.64&	0.82\\
0.76&	0.65&	0.74&	0.81&	0.71&	1&	0.72&	0.70&	0.27&	0.77&	0.64&	0.74&	0.88\\
0.83&	0.67&	0.73&	0.75&	0.65&	0.72&	1&	0.66&  0.27&	0.71&	0.60&	0.78&	0.87\\
0.61&	0.67&	0.69&	0.73&	0.64&	0.70&	0.66&	1&	0.35&	0.69&	0.58&	0.69&	0.84\\
0.31&	0.32&	0.35&	0.33&	0.37&	0.27&	0.27&	0.35&	1&	0.33&	0.33&	0.28&	0.41\\
0.70&	0.59&	0.67&	0.73&	0.64&	0.77&	0.71&	0.69&	0.33&	1&	0.54&	0.73&	0.83\\
0.56&	0.47&	0.67&	0.68&	0.69&	0.64&	0.60&	0.58&	0.33&	0.54&	1&	0.62&	0.76\\
0.69&	0.57&	0.76&	0.78&	0.64&	0.74&	0.78&	0.69&	0.28&	0.73&	0.62&	1&	0.86\\
0.82&	0.77&	0.86&	0.89&	0.82&	0.88&	0.87&	0.84&	0.41&	0.83&	0.76&	0.86&	1\\
\hline
\end{tabular}
\end{center}
\end{table}

 The correlation matrix is clearly non-diagonal. It has one outstanding large eigenvalue  (8.93689), with the  smallest eigenvalue being 0.0051 and the remaining ones are all larger than 0.0051 but smaller than one.

 Replacing expected values by empirical averages  in the main Theorem \ref{thm:bregmansystem}, we are able to find the optimal weighting for our stock selection: $$0.0841,  0.1365, 0.0429, 0.0760, 0.2040,  0.0476,  0.0528,  0.0908, 0.0253, 0.0446, 0.0888, 0.1067 $$ (rounded up to the fourth decimal). The values for the other parameters were: $\lambda=0.2, \eta=0.005 .$  These wights should be compared to the optimal non-robust weights: $$0.0786, 0.1317, 0.0487, 0.0722, 0.2026, 0.0508, 0.0567, 0.0931, 0.0250, 0.0443, 0.0902, 0.1060. $$ Following the spirit of myopic index tracking, we then applied a  moving window of size 104 when working on the
remaining out-of-sample 52 data points. Each time, a new data point was added the earliest one was deleted when re-calculating the weights in the next time point.
That is, at each time point after  $t=104,$ the sliding window contained the last 104 time points when re-calculating the weights.
The predicted weights form time point $t$ were used to update the portfolio weights to be applied at time point $t+1.$ We applied the non-robust approach, too, and compared the outcomes of the two procedures. The ETE values for the robust and non-robust approach were almost identical on the testing data set: $9.9707*10^{-6}$ for robust versus
$ 9.9552*10^{-6}$ for the non-robust but the robust approach outperformed the non-robust in the out-of-sample performance ($2.8869*10^{-5}$ for robust versus $2.9152*10^{-5}$
for non-robust).
 The  important quantity of interest, BT, was also in favour of the robust approach again in the out-of-sample performance (27 out of 52) when the quadratic loss $\ell(x)=x^2$ was used.  This increased to 42 out of 52, i.e., $80.77\%$, when the $\tilde{\ell}_1(x)$ of Equation \ref{smoothell1simple}, with $\epsilon=0.01$ was used.
These outcomes are in line with the heuristic of the methodology and with the simulations results of Sections 5.1 and  5.2. Figure 1 illustrates the  accuracy of
the approximation of the vector $B_t=1+b_t$ via $\bm{R_t}^\intercal \bm{u}^*$ across all 156 time points.  It is perhaps not surprising that we obtain such an excellent index tracking performance as the proportion of chosen stocks (12 out of 31) is relatively high, and in the same setting as in \cite{GuaS12}, two years of in-sample training and a subsequent 12 months of out-of-sample testing has been applied.

\begin{figure}
  \begin{center}
\includegraphics[width=175mm]{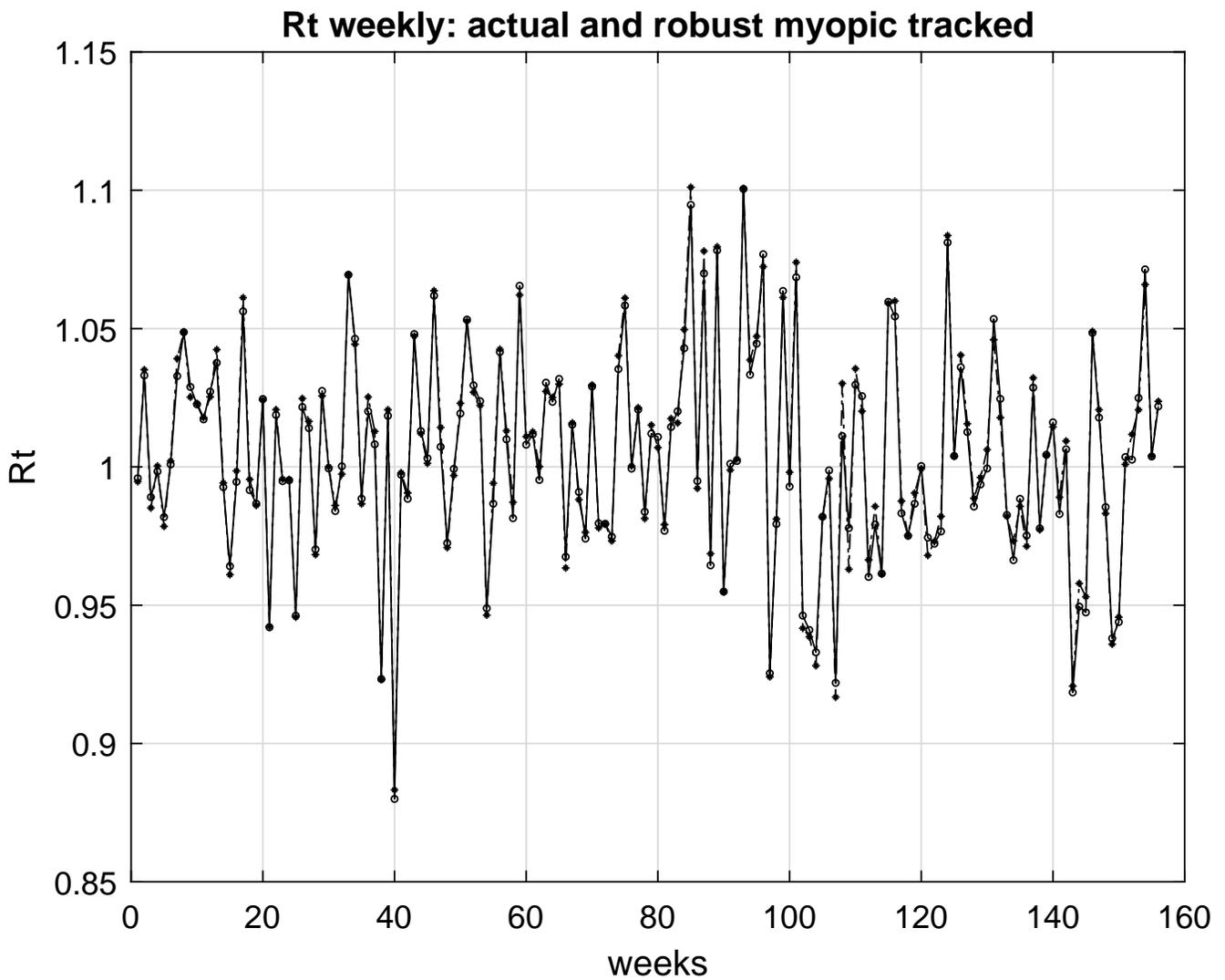}
\caption{Time series plot of $B_t=1+b_t $ and of its estimation using $\bm{R_t}^\intercal \bm{u}^*$-the robust myopic index tracking method. First 104 points: learning sample, next 52 points: out-of-sample approximation. Legend: circle-observed, asterix:robust estimator}
    \label{fig1}
  \end{center}
\end{figure}

\newpage
\section{Discussion}
Various extensions of the suggested approach are of interest and are left for  further research. As is to be expected, the robustness effect depends on more than one factor. The value of $\lambda ,$  the ``radius of contamination'' $\eta$ , and the
model distribution, all have an effect on performance and a thorough investigation of their interplay needs to be addressed in the future. Obviously, the value of the chosen radius $\eta$ of the divergence ball  is strongly related to the amount of contamination around the nominal model. This information, especially in realistic financial portfolios, is difficult to access. However our simulations lead us to believe that even with a slightly miss-specified value of $\eta ,$ one can still enjoy the improvement delivered by the robust procedure. \\

Another adjustment parameter is the $\lambda $ value in the definition of the divergence. As pointed by  \cite{BHHJ98}, there is no universal way of selecting it.  It becomes apparent that the choices of $\eta $ and $\lambda$ must be inter-related. In simplistic situations, recommendations in this paper about the choice of $\lambda$ are  given as a way of compromise by fixing an acceptable level of efficiency loss for gaining robustness, but  a thorough study of the issue is lacking. This represents an avenue for future research.\\

Another important question is how to measure the quality of the index tracking. We have focused on the quadratic loss function since this is the typical choice in the portfolio tracking literature. This choice  equally penalizes the performance of the portfolio whenever it deviates by the same magnitude irrespectively of  whether the deviation is above of below the value of the index. A more reasonable choice of loss can be based on the functions $\ell_1$ and $\ell_2$ discussed in Section \ref{emrit}.  Using such type of  loss functions delivers a  better performance in a clear downturn market scenario as shown in Section \ref{numerics}. However, in alternative mixed scenarios, and since there is often no clear separation between market upturn and market downturn in reality,  using this loss may be disadvantageous for the investor as it may reduce their average gains. Further research in this direction will also be beneficial. A thorough comparison with the approach from \cite{Roll1992} is on our agenda. \\

Finally, the numerical examples in this paper illustrated effects when the actual distribution is on a maximal allowable distance from the nominal. This is not necessarily the least-favorable distribution: the least favorable distribution will never be known in practice and in the theoretical discussion we can only get it in the semi-closed form (\ref{semiclosed}) as a part of an implicit solution of an equation system. Despite this, the actual distributions we used for numerical illustrations  still give a good proxy of the expected effect. Of course, in our theoretical derivations, we did not need the explicit form of the least-favorable distribution and the  derivations in Section \ref{sec:semianalyopall} remain universally valid.  The mean vector and the covariance matrix used in our simulations were selected to be close to the daily returns in the Australian share market. For daily returns, assuming multivariate normality is often appropriate. However, when the returns are collected from a longer time horizon or when the nominal distribution itself is different from the multidimensional normal, the benefits of the robust portfolio  in the case of quadratic loss may be more or less spectacular depending on how heavy-tailed the nominal distribution turns out to be. In this case, explicit formulae for the divergence, such as, for example, (\ref{fullform}), would rarely be available. This does not prevent our methodology from working- the required expected values under the nominal distribution in the main Theorem \ref{thm:bregmansystem} are to be replaced with their empirical counterparts. More numerical work could demonstrate the advantages of the methodology in such heavy-tailed cases.\\

 As mentioned in the introduction, we do not discuss  the approaches to sparse index tracking portfolio selection in this paper. The combined requirement for sparsity and robustness in index tracking leads to interesting and challenging optimization problems that are left as a future research avenue.

\section{Appendix}
\subsection{
Justification of (\ref{fullform})  and (\ref{eqn:bregdivsimp})} Direct  calculation yields:
\begin{eqnarray}
\label{generalformula}
            D_{Breg}(\mathcal{E})
  &  =  &   \mathbb{E}\Big(\frac{1}{\lambda}\mathcal{E}^{\lambda+1} - \frac{\lambda+1}{\lambda}\mathcal{E} + 1\Big)                        \nonumber \\
  &  =  &   \frac{1}{\lambda}\mathbb{E}(\mathcal{E}^{\lambda+1}) - \frac{1}{\lambda}                                                       \nonumber \\
  &  =  &   \frac{1}{\lambda}\mathbb{E}\Bigg(\Bigg(\frac{\sqrt{\det(2\pi\bm{\Sigma}_{1})}}{\sqrt{\det(2\pi\bm{\Sigma}_{2})}}\frac{\exp\Big(-\frac{1}{2}(\bm{X}-\bm{\mu}_{2})^{\intercal}\bm{\Sigma}^{-1}_{2}(\bm{X}-\bm{\mu}_{2})\Big)}
            {\exp\Big(-\frac{1}{2}(\bm{X}-\bm{\mu}_{1})^{\intercal}\bm{\Sigma}^{-1}_{1}(\bm{X}-\bm{\mu}_{1})\Big)}\Bigg)^{\lambda+1}\Bigg) \nonumber \\
           & &  - \frac{1}{\lambda}      \nonumber \\
  &  =  &
  \frac{1}{\lambda}\Big(\frac{\big(\det(\bm{\Sigma}_{1}\bm{\Sigma}_{2}^{-1})\big)^{\lambda+1}\det(\tilde{\bm{\Sigma}}_{\lambda})}{\det(\bm{\Sigma}_{1})}\Big)^{\frac{1}{2}}
            \exp\Big(-\frac{\lambda+1}{2}\bm{\mu}_{2}^{\intercal}\bm{\Sigma}_{2}^{-1}\bm{\mu}_{2}     \nonumber \\
  &     & + \frac{\lambda}{2}\bm{\mu}_{1}^{\intercal}\bm{\Sigma}_{1}^{-1}\bm{\mu}_{1}+ \frac{1}{2}\tilde{\bm{\mu}}_{\lambda}^{\intercal}\tilde{\bm{\Sigma}}_{\lambda}^{-1}\tilde{\bm{\mu}}_{\lambda}\Big) - \frac{1}{\lambda}.             \nonumber
\end{eqnarray}
By substituting $\bm{\Sigma}_{1} = \bm{\Sigma}_{2} = \bm{\Sigma}$, we obtain (\ref{eqn:bregdivsimp}).
\subsection{Justification of the solution to the outer optimization problem }
\label{finishing}
 We notice that

  \begin{eqnarray}
              L^{inner}(\mathcal{E}^{\ast},\bm{u})
   &  =  &    \mathbb{E}\Big(H(\bm{u})\mathcal{E}^{\ast} + \alpha^{\ast}(G(\mathcal{E}^{\ast}) - \eta) + \beta^{\ast}(\mathcal{E}^{\ast} - 1)\Big)                          \nonumber \\
   &  =  &    \mathbb{E}\Bigg(H(\bm{u})\Big(\frac{\lambda}{\lambda+1}\Big(\frac{-\beta^{\ast} - H(\bm{u})}{\alpha^{\ast}}\Big) + 1\Big)^{\frac{1}{\lambda}}\Bigg)           \nonumber \\
   &     &  + \alpha^{\ast}\mathbb{E}\Bigg(\frac{1}{\lambda}\Big(\frac{\lambda}{\lambda+1}\Big(\frac{-\beta^{\ast} - H(\bm{u})}{\alpha^{\ast}}\Big) + 1\Big)^{\frac{\lambda+1}{\lambda}}
              \nonumber \\
   &     &  - \frac{\lambda+1}{\lambda}\Big(\frac{\lambda}{\lambda+1}\Big(\frac{-\beta^{\ast} - H(\bm{u})}{\alpha^{\ast}}\Big) + 1\Big)^{\frac{1}{\lambda}} + 1 - \eta\Bigg) \nonumber \\
   &     &  + \beta^{\ast}\mathbb{E}\Bigg(\Big(\frac{\lambda}{\lambda+1}\Big(\frac{-\beta^{\ast} - H(\bm{u})}{\alpha^{\ast}}\Big) + 1\Big)^{\frac{1}{\lambda}} - 1\Bigg).    \nonumber
  \end{eqnarray}

  We write the Lagrangian of the outer optimization problem.
  \begin{eqnarray}
              L^{outer}(\mathcal{E}^{\ast},\bm{u})
   &  =  &    \mathbb{E}\Bigg(H(\bm{u})\Big(\frac{\lambda}{\lambda+1}\Big(\frac{-\beta^{\ast} - H(\bm{u})}{\alpha^{\ast}}\Big) + 1\Big)^{\frac{1}{\lambda}}\Bigg)   \nonumber \\
   &     &  + \alpha^{\ast}\mathbb{E}\Bigg(\frac{1}{\lambda}\Big(\frac{\lambda}{\lambda+1}\Big(\frac{-\beta^{\ast} - H(\bm{u})}{\alpha^{\ast}}\Big) + 1\Big)^{\frac{\lambda+1}{\lambda}}  \nonumber   \\
   &     &  - \frac{\lambda+1}{\lambda}\Big(\frac{\lambda}{\lambda+1}\Big(\frac{-\beta^{\ast} - H(\bm{u})}{\alpha^{\ast}}\Big) + 1\Big)^{\frac{1}{\lambda}} + 1 - \eta\Bigg)          \nonumber \\
   &     &  + \beta^{\ast}\mathbb{E}\Bigg(\Big(\frac{\lambda}{\lambda+1}\Big(\frac{-\beta^{\ast} - H(\bm{u})}{\alpha^{\ast}}\Big) + 1\Big)^{\frac{1}{\lambda}} - 1\Bigg)   \nonumber  \\
   &     &  - \theta\Big(\bm{1}^{\intercal}\bm{u} - 1\Big).         \nonumber
  \end{eqnarray}
  The first order condition can then be obtained:
  \begin{eqnarray}\label{eqn:optimrobuststr}
              \mathbb{E}\Bigg(\frac{\partial{H}}{\partial \bm{u}}\Big(\frac{\lambda}{\lambda+1}\Big(\frac{-\beta^{\ast} - H(\bm{u})}{\alpha^{\ast}}\Big) + 1\Big)^{\frac{1}{\lambda}}\Bigg)
    &  =  &   \theta^{\ast}\bm{1},                                    \nonumber  \\
              \bm{1}^{\intercal}\bm{u}
    &  =  &   1.
  \end{eqnarray}

  To check that the solution of the above equation is optimal, we calculate the Hessian. For $\bm{y} \in \mathbb{R}^{n}$, we see that
  \begin{eqnarray}
              \bm{y}^{\intercal}\Big(\frac{\partial^{2} L^{outer}(\mathcal{E}^{\ast},\bm{u})}{\partial \bm{u}\bm{u}^{\intercal}}\Big)\bm{y}
   &  =  &    \mathbb{E}\Bigg(\bm{y}^{\intercal}\frac{\partial^{2}{H}}{\partial \bm{u}\bm{u}^{\intercal}}\bm{y}\mathcal{E}^{\ast}
            - \frac{1}{\alpha^{\ast}}\frac{1}{1+\lambda}\Big(\bm{y}^{\intercal}\frac{\partial{H}}{\partial \bm{u}}\Big)^{2}\big(\mathcal{E}^{\ast}\big)^{\frac{1}{1-\lambda}}\Bigg)    \nonumber
  \end{eqnarray}
  Since $\alpha^{\ast} > 0$,
  \begin{eqnarray}
              \bm{y}^{\intercal}\frac{\partial^{2}{H}}{\partial \bm{u}\bm{u}^{\intercal}}\bm{y}
     &  =  &  -2\big(\bm{y}^{\intercal}\bm{R}\big)^{2}   \leq  0,      \nonumber
  \end{eqnarray}
  it is easy to see that the Hessian is negative semi-definite. As a consequence, we have verified that the solution of (\ref{eqn:optimrobuststr}) is optimal, and we will denote it as $\bm{u}^{\ast}$. This finishes the proof of Theorem \ref{thm:bregmansystem}. \\

\section*{Funding} This work was supported by the Australian Research Council's Discovery Project funding scheme (Project DP160103489).

\end{document}